%% file: root.tex
\documentclass[journal]{IEEEtran}
\usepackage{mathrsfs}
\usepackage{amsthm} 
\usepackage{amsmath,amsfonts}
\usepackage{array}
\usepackage{url}
\usepackage{graphicx}
\usepackage[caption=false,font=footnotesize]{subfig}
\usepackage{algorithm}

\usepackage{lineno,hyperref}
\usepackage{moreverb}
\usepackage{cite}

\usepackage{epstopdf}
\usepackage{xcolor}
\usepackage{indentfirst} 
\usepackage{float}
\usepackage{colortbl,booktabs}

\usepackage{multirow}
\usepackage{amssymb}
\usepackage{soul}

\usepackage{tabularx}
\usepackage{makecell}

\usepackage{tikz}
\usepackage{pgfplots}
\pgfplotsset{compat=newest}

\def\BibTeX{{\rm B\kern-.05em{\sc i\kern-.025em b}\kern-.08em
    T\kern-.1667em\lower.7ex\hbox{E}\kern-.125emX}}
\usepackage{balance}
\usepackage{enumitem}

\usepackage{newtxtext}

\newtheorem{theorem}{Theorem}
\newtheorem{lemma}{Lemma}
\newtheorem{definition}{Definition}
\newtheorem{remark}{Remark}
\newtheorem{assumption}{Assumption}
\newtheorem{proposition}{Proposition}

\newtheorem{problem}{Problem}

\usepackage{hyperref} \hypersetup{pdfstartview=FitH,
	CJKbookmarks=true,
	bookmarksnumbered=true,
	bookmarksopen=true,
	colorlinks,
	pdfborder={0 0 1},
	linkcolor=blue,
	anchorcolor=blue,
	citecolor=blue,} 

\begin{document}
\title{Data-Driven Synthesis of Robust Positively Invariant Sets:\\ From State Feedback to Output Feedback}
\author{
		
		Zhijie~Ning


		\thanks{Zhijie~Ning is with the Department of Control Science and Engineering, Harbin Institute of Technology, Harbin 150001, China (e-mail: ningzj@stu.hit.edu.cn).}
		}

\maketitle

\bibliographystyle{IEEEtran}

\begin{abstract}
	This paper develops a direct data-driven framework for robust positively invariant (RPI) set synthesis for unknown linear time-invariant systems under state-feedback and observer-based output-feedback scenarios.
	The feedback and observer gains, along with the RPI sets, are directly synthesized from noisy offline data by solving semidefinite programs (SDPs), avoiding intermediate model identification or explicit model-uncertainty set construction.
	In the state-feedback case, a linear-quadratic (LQ)-type feedback gain is first computed, and an ellipsoidal RPI set is then synthesized for the resulting closed-loop dynamics.
	In the observer-based output-feedback case, offline data are used to compute the observer gain and the corresponding RPI set for the system state through an augmented-state formulation.
	This design provides a unified method for invariant-set computation in both scenarios, and the direct data-driven formulation avoids the explicit construction and propagation of an intermediate model-uncertainty set.
	Numerical examples illustrate the effectiveness of the proposed method.
\end{abstract}
\begin{IEEEkeywords}
Data-driven synthesis, RPI set, S-procedure, output-feedback, LQ-type performance.

\end{IEEEkeywords}

\input{body/1.Introduction}

\input{body/2.Preliminary}
\input{body/3.IS}
\input{body/4.IO}

\input{body/5.Sim}

\section{Conclusion}\label{sec:Conclusion}
This paper presented a direct data-driven framework for the synthesis of RPI sets for unknown linear time-invariant systems.
The decoupled synthesis architecture and the S-procedure avoided the non-convex coupling between feedback gains and invariant sets, and the resulting synthesis conditions were formulated as tractable convex SDPs after fixing scalar multipliers.
This approach retained the nominal LQ-type gain-design objective while circumventing the reliance on explicit model identification and model uncertainty set construction, thereby avoiding the explicit construction and propagation of an intermediate model-uncertainty set.
Furthermore, the framework was extended to the practical observer-based output-feedback scenario, providing invariant bounds for system states with limited online state observability.

Future research will explore the systematic tuning of hyperparameters to further reduce geometric conservatism and will incorporate the proposed invariant-set construction into data-driven predictive control schemes to address closed-loop constraint satisfaction.

\bibliography{reference}

\end{document}

%% file: body/1.Introduction.tex
\section{Introduction}
The theory of invariant sets, particularly robust positively invariant (RPI) sets, provides a fundamental tool for ensuring worst-case safety of disturbed control systems \cite{rakovic2007optimized, blanchini1999set}. 
By ensuring that system trajectories remain within prescribed sets despite admissible disturbances, RPI sets play a central role in tube-based model predictive control (MPC) through constraint tightening \cite{mayne2005robust} and in trajectory tracking through robust error bounds \cite{chen2021data}.
Therefore, computing less conservative RPI sets is essential for reducing the conservatism of robust control designs.

Historically, computing tight RPI approximations fundamentally depended on exact analytical models, employing techniques such as Minkowski sum methods \cite{rakovic2005invariant}, linear programming \cite{trodden2016one}, and Lyapunov-based techniques \cite{nazin2007rejection}. 
However, obtaining accurate first-principles dynamics for real-world systems is often infeasible, motivating the shift toward data-driven synthesis. 
A common route is to first identify a model and then compute an invariant set.
While intuitive, this approach is vulnerable to estimation bias, which can unexpectedly shift the synthesized RPI set and lead to unsafe control actions.


To mitigate this issue, recent research has studied data-driven RPI synthesis through model-set-based approaches, including zonotope-based reachability methods \cite{alanwar2023data,farjadniaRobustDataDrivenTubeBased2024} and set-membership methods \cite{chen2021data,mulagaleti2021data,wang2026data}.
Although they employ different set representations and computational tools, both approaches first characterize or over-approximate a family of models consistent with noisy data and then construct reachable or invariant sets that remain valid over this model family.
Zonotope-based methods provide efficient set-propagation mechanisms and have recently been embedded into tube-based predictive control with closed-loop guarantees \cite{farjadniaRobustDataDrivenTubeBased2024}.
Set-membership methods instead characterize the admissible model family through explicit consistency constraints and subsequently synthesize controllers or invariant sets that are robust over this family.
The method in \cite{mulagaleti2021data} jointly synthesizes models, feedback gains, and invariant sets, but relies on non-convex optimization and a preselected set topology, while the recent method in \cite{wang2026data} constructs data-driven RPI sets under process or measurement noise with polytopic or ellipsoidal bounds.
The conservatism of these model-set-based methods therefore depends strongly on the tightness of the data-consistent model set and the subsequent outer approximations.

More importantly, extending data-driven RPI synthesis to practical output-feedback scenarios remains challenging and comparatively less explored \cite{mayneRobustOutputFeedback2006}.
The coupling of process noise, measurement noise, and estimation errors introduces additional uncertainty channels into the invariant-set construction, making direct extensions of existing zonotopic and set-membership approaches nontrivial.


To break these bottlenecks, data-driven frameworks based on Willems' fundamental lemma \cite{willemsNotePersistencyExcitation2004,depersisFormulasDataDrivenControl2020} provide a powerful alternative.
By establishing that any valid system trajectory can be expressed as a linear combination of historical data matrices under a persistency of excitation condition, this lemma enables the direct parameterization of system behavior from raw data in noise-free scenarios. 
This capability allows the synthesis conditions to be written without explicitly constructing intermediate model proxies, thereby avoiding the potential conservatism associated with model-uncertainty descriptions.
This characteristic inspires the development of direct data-driven RPI synthesis methods, which express closed-loop synthesis conditions through measured data rather than through an identified model.

Motivated by these gaps, this paper develops a direct data-driven framework for RPI set synthesis, covering both input-state feedback and output-feedback cases.
In contrast to the methods mentioned above, the proposed conditions are written directly in terms of measured data matrices and solved as SDPs.
The main contributions of this paper are summarized as follows:

\begin{itemize}
    \item 
    We propose a direct data-driven method for RPI set synthesis in state-feedback and output-feedback scenarios under bounded process and measurement noise.
    Compared with recent data-driven RPI methods, the invariant sets are computed from data-dependent matrix inequalities without explicitly constructing an intermediate model-uncertainty set.

    \item 
    A decoupled synthesis framework is developed.
    This framework retains the nominal LQ-type gain-design objective while reducing the non-convex coupling between gain synthesis and RPI-set computation.

    \item 
    The framework is extended to the output-feedback case with offline data.
    By combining an estimation-error invariant set and a true-state RPI set, the proposed method provides data-driven invariant bounds for the physical state when the true state is unavailable online.

\end{itemize}

\textit{Notation:} 
$\mathbb{R}$ and $\mathbb{Z}$ denote the sets of real numbers and integers, respectively.
$\mathbb{Z}_{+}$ denotes the set of nonnegative integers.
$\mathbb{R}^n$ and $\mathbb{R}^{n\times m}$ denote the $n$-dimensional Euclidean space and the space of $n\times m$ real matrices, respectively.
For a matrix $M$, $M^T$, $M^\dagger$, $\operatorname{rank}(M)$, and $\operatorname{tr}(M)$ denote its transpose, Moore--Penrose pseudoinverse, rank, and trace, respectively.
For a symmetric matrix $P$, $P\succ0$ and $P\succeq0$ denote positive definiteness and positive semidefiniteness, respectively.
$\lambda_{\min}(P)$ and $\lambda_{\max}(P)$ denote the minimum and maximum eigenvalues of a symmetric matrix $P$.
$I_n$ is the $n\times n$ identity matrix, and $\mathbf{0}$ denotes a zero matrix with compatible dimensions.
The operator $\operatorname{diag}(\cdot)$ denotes a block-diagonal matrix constructed from its arguments.
For a positive definite matrix $Q$, $Q^{1/2}$ denotes its symmetric square root.
In a symmetric block matrix, $*$ denotes the transpose of the corresponding off-diagonal block.

%% file: body/2.Preliminary.tex
\section{Preliminaries and Problem Formulation}
\subsection{System Description}
Consider a discrete-time linear time-invariant system with time step $k\in \mathbb{Z}^+$:
\begin{subequations}\label{eq:system}
    \begin{align}
        x_{k+1} &= A x_k + B u_k + w_k, \label{eq:system:a}\\
        y_k &= C x_k + v_k  , \label{eq:system:b}
    \end{align}
\end{subequations}
where $x_k\in\mathbb{R}^{n}$, $y_k\in\mathbb{R}^{n_y}$, $u_k\in\mathbb{R}^{n_u}$, $w_k\in\mathbb{R}^{n}$, and $v_k\in\mathbb{R}^{n_y}$.
$\mathcal{X}$, $\mathcal{U}$, and $\mathcal{Y}$ are compact and convex sets representing the state, input, and output constraints, respectively.
The process noise $w_k$ and measurement noise $v_k$ are assumed to lie in the bounded ellipsoidal sets $\mathcal{W}$ and $\mathcal{V}$, respectively, i.e., $\mathcal{W} = \{ w \mid w^T Q_w^{-1} w \leq 1, Q_w \succ 0 \}$ and $\mathcal{V} = \{ v \mid v^T Q_v^{-1} v \leq 1, Q_v \succ 0 \}$.

Let the superscript $d$ denote data collected offline.
Consider the offline state, input, and output sequences $\{x^d_k\}_{k=0}^{L}$, $\{u^d_k\}_{k=0}^{L-1}$, and $\{y^d_k\}_{k=0}^{L-1}$ generated by \eqref{eq:system}, alongside the unknown process and measurement noise sequences $W_0 \triangleq [w^d_0, \cdots, w^d_{L-1}]$ and $V_0 \triangleq [v^d_0, \cdots, v^d_{L-1}]$.
The corresponding data matrices are constructed as $X_0 \triangleq [x^d_0, \cdots, x^d_{L-1}]$, $X_1 \triangleq [x^d_1, \cdots, x^d_{L}]$, $U_0 \triangleq [u^d_0, \cdots, u^d_{L-1}]$, and $Y_0 \triangleq [y^d_0, \cdots, y^d_{L-1}]$.

\begin{assumption}\label{Assumption:UnknownSys}
    The system matrices $A$, $B$, and $C$ are unknown, but the pairs $(A,B)$ and $(A,C)$ are controllable and observable, respectively.
    Only the offline data matrices $X_0, X_1, U_0, Y_0$ are accessible.
\end{assumption}

\begin{assumption}\label{Assumption:SNR}
    There exist known scalars $\gamma_w >0$ and $\gamma_v >0$ such that the signal-to-noise ratio (SNR) conditions $W_0 W_0^T  \preceq \gamma_w X_1 X_1^T , \ V_0 V_0^T \preceq \gamma_v Y_0 Y_0^T$ hold.
\end{assumption}

\begin{assumption}\label{Assumption:FullRank}
    The data matrix $[X_0\ U_0]^T$ has full row rank, i.e., $\text{rank}\begin{bmatrix} X_0 \\ U_0 \end{bmatrix} = n + n_u$.
    In disturbance-free settings, this rank condition is guaranteed by a persistently exciting input of order $n+1$ under the controllability assumption \cite{depersisFormulasDataDrivenControl2020}.
\end{assumption}

\begin{definition}\label{Definition:S}
    A set $\mathcal{S}$ is said to be an RPI set for system~\eqref{eq:system} if $x_{k+1} \in \mathcal{S}$ for all $x_k \in \mathcal{S}$ and all admissible disturbances in \eqref{eq:system}.
\end{definition}

\subsection{Problem Formulation}
In this paper, we investigate direct data-driven RPI synthesis under state-feedback and output-feedback scenarios for \eqref{eq:system}. 
In the proposed framework, we parameterize the RPI set as an ellipsoid with a free shape matrix.
This compact quadratic representation yields tractable invariance conditions and enables direct optimization of the invariant-set geometry.

The problem considered in this paper is hierarchical.
First, we investigate the idealized scenario where the full state vector $x_k$ is precisely measurable during both the offline data collection phase and the online phase (i.e., $C = I$ and $v_k = 0$).
Unlike \cite{wang2026data}, we retain an LQ-type feedback design while decoupling the feedback gain $K$ from the RPI set $\mathcal{S}_x$.
The problem is formalized as follows.
\begin{problem}\label{problem:1}
    For system~\eqref{eq:system:a} with fully measurable states, given offline data $X_0, X_1, U_0$ satisfying Assumption~\ref{Assumption:SNR}-\ref{Assumption:FullRank}, the objective is to determine a stabilizing state-feedback gain $K$ and an RPI set $\mathcal{S}_x$ directly from data such that $x_{k+1} \in \mathcal{S}_x$ for all $x_k \in \mathcal{S}_x$ and $w_k \in \mathcal{W}$.
\end{problem}

We next consider a practical output-feedback scenario in which only the input $u_k$ and the noise-corrupted output $y_k$ are measurable online, while the full state $x_k$ is available during offline data collection.
Such a setting is common in experimental platforms, where high-precision motion-capture systems can provide complete state information in the laboratory, but the deployed system must operate using partial state and noisy onboard measurements, such as GPS signals.

Since the true state is unavailable online, an observer-based feedback structure is introduced to characterize invariant bounds for the physical state.
Specifically, a Luenberger observer with gain $L$ is used to reconstruct the state estimate $\hat{x}_k$, and the control input is implemented as $u_k = K\hat{x}_k$.
By bounding the estimation error $e_k = x_k - \hat{x}_k$ with an invariant set $\mathcal{S}_e$, we obtain the true-state RPI set $\tilde{\mathcal{S}}_x$ to provide guarantees for the true state $x_k$ \cite{mayneRobustOutputFeedback2006}.
The problem is formalized as follows.
\begin{problem}\label{problem:2}
    For system~\eqref{eq:system} with states unavailable online, given offline data $X_0, X_1, U_0, Y_0$ satisfying Assumption~\ref{Assumption:UnknownSys}-\ref{Assumption:FullRank}, the objective is to determine a controller gain $K$, an observer gain $L$, an estimation-error RPI set $\mathcal{S}_e$, and a true-state RPI set $\tilde{\mathcal{S}}_x$ directly from data such that 
    \begin{equation}
        \begin{aligned}
            & e_{k+1} \in \mathcal{S}_e \text{ for all } e_k \in \mathcal{S}_e, w_k \in \mathcal{W}, \text{ and } v_k \in \mathcal{V},\\
            & x_{k+1} \in \tilde{\mathcal{S}}_x \text{ for all } x_k \in \tilde{\mathcal{S}}_x, e_k \in \mathcal{S}_e, w_k \in \mathcal{W}, \text{ and } v_k \in \mathcal{V}.
        \end{aligned}
    \end{equation}
\end{problem}

\subsection{Mathematical Preliminaries}
We state two mathematical lemmas which serve as fundamental tools for the subsequent analytical developments.

\begin{lemma}[Young's Inequality \cite{petersen1987stabilization}]\label{Lemma:Young}
    For arbitrary matrices $X$, $Y$, $F \succ 0$ and a scalar $\delta>0$, the inequality $X F Y^T + Y F X^T \preceq \delta XFX^T + \delta^{-1} Y F Y^T$ holds.
\end{lemma}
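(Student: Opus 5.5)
The statement to prove is Lemma~\ref{Lemma:Young}: for matrices $X$, $Y$ of compatible dimensions, $F\succ 0$ and $\delta>0$,
\[
XFY^T+YFX^T\preceq \delta XFX^T+\delta^{-1}YFY^T .
\]
I will complete a square, which reduces the claim to the positive semidefiniteness of a single Gram-type matrix. Because $F\succ0$ is symmetric, the congruence $M\mapsto M F M^T$ preserves positive semidefiniteness. So the plan is to find a matrix $M$, built linearly from $X$ and $Y$, such that $MFM^T$ equals the difference of the two sides.

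\textbf{Choice of $M$.} Since $\delta>0$, the square roots $\delta^{1/2}$ and $\delta^{-1/2}$ are well defined, and I set
\[
M=\delta^{1/2}X-\delta^{-1/2}Y .
\]

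\textbf{Expansion.} Expanding the product and using $F^T=F$ gives
\[
MFM^T=\delta XFX^T-XFY^T-YFX^T+\delta^{-1}YFY^T .
\]
The cross terms carry the factor $\delta^{1/2}\delta^{-1/2}=1$, so they appear with coefficient exactly $-1$.

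\textbf{Positive semidefiniteness.} For any vector $z$ of compatible dimension, let $\xi=M^Tz$. Then
\[
z^TMFM^Tz=\xi^TF\xi\ge 0,
\]
because $F\succ0$. Hence $MFM^T\succeq0$. Rearranging the expansion, this is exactly $XFY^T+YFX^T\preceq \delta XFX^T+\delta^{-1}YFY^T$.

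The only point needing care is dimensional. $X$ and $Y$ must both have the same number of columns as $F$, so that the combination $M$ is defined. This is implicit in the statement, since otherwise the cross terms $XFY^T$ and $YFX^T$ are not defined. Beyond that there is no real obstacle; the argument is a one-line completion of squares.
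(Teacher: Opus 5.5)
Your proof is correct: the identity $MFM^T=\delta XFX^T-XFY^T-YFX^T+\delta^{-1}YFY^T$ with $M=\delta^{1/2}X-\delta^{-1/2}Y$, together with $MFM^T\succeq 0$, is the standard completion-of-squares argument, and it in fact only needs $F\succeq 0$. The paper gives no proof of this lemma and simply cites it from the literature, so your argument supplies the routine justification the citation stands in for.
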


\begin{lemma}[S-Procedure \cite{boydLinearMatrixInequalities1994}]\label{Lemma:S-proc}
    Let $f, g_1, \cdots, g_m : \mathbb{R}^n \to \mathbb{R}$ ($m\in \mathbb{Z}^+$) be quadratic functions. The implication $g_1(z) \geq 0, \cdots, g_m(z) \geq 0 \implies f(z) \geq 0$ holds if there exist scalars $\alpha_i \geq 0$ ($i=1,\cdots,m$) such that $f(z) - \sum_{i=1}^m \alpha_i g_i(z) \geq 0$ for all $z\in \mathbb{R}^n$.
\end{lemma}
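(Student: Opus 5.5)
The statement just above is the S-procedure, Lemma~\ref{Lemma:S-proc}, in its sufficiency direction, so the proof is a short direct argument from the sign conditions. There is no need for the lossless (converse) direction, so no constraint qualification or convexity argument enters.

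Fix an arbitrary $z\in\mathbb{R}^n$ satisfying $g_i(z)\ge 0$ for all $i=1,\dots,m$. We want to show $f(z)\ge 0$.

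By hypothesis there are scalars $\alpha_i\ge 0$ such that $f(\zeta)-\sum_{i=1}^m\alpha_i g_i(\zeta)\ge 0$ for every $\zeta\in\mathbb{R}^n$. Evaluating at $\zeta=z$ gives
\begin{equation*}
f(z)\ \ge\ \sum_{i=1}^m \alpha_i g_i(z).
\end{equation*}
Each term on the right is a product of two nonnegative numbers, $\alpha_i\ge0$ and $g_i(z)\ge 0$. The sum is therefore nonnegative, and so $f(z)\ge 0$.

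Since $z$ was an arbitrary point of the feasible set $\{z: g_i(z)\ge0,\ i=1,\dots,m\}$, the implication holds.

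It may help to note how this is used later. In the form employed in the paper, $f$ and $g_i$ are quadratic forms in a stacked vector, for instance $z=[x_k^T\ w_k^T\ 1]^T$. The condition ``$f-\sum\alpha_i g_i\ge0$ for all $z$'' then becomes the matrix inequality $M_f-\sum\alpha_i M_{g_i}\succeq 0$, with $M_f$ and $M_{g_i}$ the symmetric matrices representing $f$ and $g_i$ in the homogeneous coordinates. The argument above applies verbatim.

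None of the steps is a real obstacle. The only care needed is to use the nonnegativity of the multipliers $\alpha_i$, which is exactly what makes the weighted sum of the $g_i(z)$ nonnegative.
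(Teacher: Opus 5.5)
Your proof is correct: evaluating the certificate at a feasible $z$ and using $\alpha_i \ge 0$, $g_i(z) \ge 0$ is the standard argument for the sufficiency direction of the S-procedure. The paper gives no proof of its own and only cites Boyd et al., where this direction rests on the same one-line reasoning, so your approach matches.
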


%% file: body/3.IS.tex
\section{State-Feedback Synthesis}
To address Problem~\ref{problem:1}, substituting $u_k=Kx_k$ into \eqref{eq:system:a} yields the closed-loop invariance condition $(A+BK)x_k+w_k\in\mathcal{S}_x$ for all $x_k\in\mathcal{S}_x$ and $w_k\in\mathcal{W}$.
By designing $K$ and $\mathcal{S}_x$ sequentially, we reformulate this intractable condition as convex SDPs.

\subsection{State-Feedback Gain Design}
We employ a quadratic Lyapunov function $V_f(x)=x^TPx$ satisfying the following discrete-time algebraic Riccati inequality (DARI) to ensure the LQ-type performance bound of the closed-loop system:
\begin{equation}\label{eq:Lyapunov Condition}
    (A+BK)^T P(A+BK)-P \preceq -(Q+K^T R K).
\end{equation}
Since the system pair $(A,B)$ is unknown, standard model-based solvers for the algebraic Riccati equation cannot be directly applied to obtain $P$ and $K$.
Instead, based on Assumption~\ref{Assumption:FullRank} and \cite[Theorem~34]{van2020data}, there exist basis matrices $\Phi_1^\dagger\in \mathbb{R}^{L\times n}$ and $\Phi_2^\dagger\in \mathbb{R}^{L\times n_u}$ such that
\begin{equation}\label{eq:Phi}
    \begin{bmatrix} X_0\\U_0 \end{bmatrix} \begin{bmatrix} \Phi_1^\dagger & \Phi_2^\dagger \end{bmatrix} = I.
\end{equation}
Using \eqref{eq:Phi} in \eqref{eq:system:a} yields
\begin{equation}
    A=(X_1-W_0)\Phi_1^\dagger,\quad B=(X_1-W_0)\Phi_2^\dagger.
\end{equation}
Consequently, the closed-loop state matrix can be parameterized by data as $A+BK=(X_1-W_0)(\Phi_1^\dagger+\Phi_2^\dagger K)$.
The components $\Phi_1^\dagger$ and $\Phi_2^\dagger$ can be obtained offline by solving the following SDP \cite{liu2024learning}:
\begin{equation}\label{eq:data_driven_LMI}
    \begin{aligned}
        &\min_{\rho,M_2,\Phi_1^\dagger,\Phi_2^\dagger} \rho\\
        \text{s.t.}\quad
        &\begin{bmatrix} X_0\\U_0 \end{bmatrix}
        \begin{bmatrix} \Phi_1^\dagger & \Phi_2^\dagger \end{bmatrix}=I\\
        &\begin{bmatrix} M_2 & (\Phi_1^\dagger)^T\\ \Phi_1^\dagger & I \end{bmatrix}\succeq0\\
        &\rho I-M_2\succeq0.
    \end{aligned}
\end{equation}

Condition \eqref{eq:Lyapunov Condition} implies the infinite-horizon cost bound $\sum_{k=0}^{\infty}(x_k^TQx_k+u_k^TRu_k)\leq x_0^TPx_0$, where $x_0$ is the initial state.
To eliminate the dependence on $x_0$, we minimize the trace of $P$ as an aggregate surrogate for the LQ-type performance bound.
Drawing upon the linear matrix inequality (LMI) construction in \cite[Theorem~1]{dengEventTriggeredRobustMPC2024}, we present the following proposition to compute $P$ and $K$ satisfying the condition in \eqref{eq:Lyapunov Condition}.

\begin{proposition}\label{proposition:K}
    Let $\bar{P}\in \mathbb{R}^{n\times n} \succ 0$ and $Z_P\in \mathbb{R}^{n\times n}$ be symmetric matrices, and let $F_1\in \mathbb{R}^{L\times n}$ and $\bar{H}\in \mathbb{R}^{n\times n}\succ 0$ be decision variables.
    For a given scalar $\delta_1>0$, if the following SDP is feasible, then the recovered matrices $P$ and $K$ satisfy the condition in \eqref{eq:Lyapunov Condition}.
    \begin{subequations}
        \begin{align}
            \min_{\bar{P},F_1,\bar{H},Z_P} \quad & \operatorname{trace}(Z_P)\\
            \text{s.t.}\quad
            &\begin{bmatrix}
                \bar{P}-\gamma_w(1+\delta_1^{-1})X_1X_1^T & X_1F_1\\
                F_1^TX_1^T & (1+\delta_1)^{-1}\bar{H}
            \end{bmatrix}\succeq0\label{eq:SDP_1}\\
            &\begin{bmatrix}
                \bar{H} & (U_0F_1)^T & \bar{H}\\
                U_0F_1 & R^{-1} & 0\\
                \bar{H} & 0 & Q^{-1}
            \end{bmatrix}\succeq0\label{eq:SDP_2}\\
            &\begin{bmatrix}
                I & F_1\\
                F_1^T & \bar{H}
            \end{bmatrix}\succeq0\label{eq:SDP_3}\\
            &\bar{H}\succeq2\bar{P}\label{eq:SDP_4}\\
            &X_0F_1=\bar{H},\label{eq:SDP_5}\\
            &\begin{bmatrix}
                Z_P & I\\
                I & \bar{P}
            \end{bmatrix}\succeq0.\label{eq:SDP_6}
        \end{align}
    \end{subequations}
    where $P=\bar{P}^{-1}$ and $K=U_0F_1\bar{H}^{-1}$.
\end{proposition}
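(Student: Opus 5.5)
The plan is to reduce \eqref{eq:Lyapunov Condition} to the constraints through a closed-loop data parameterization, a change of variables, Young's inequality (Lemma~\ref{Lemma:Young}) and Schur complements. I would split the DARI into two bounds, each worth $P/2$, and add them at the end.

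\emph{Step 1 (data representation of the closed loop).} I would set $G\triangleq F_1\bar H^{-1}$, which is well defined since $\bar H\succ0$. By \eqref{eq:SDP_5} and the definition of $K$, we get $X_0G=I$ and $U_0G=K$. Since $X_1=AX_0+BU_0+W_0$, it follows that $X_1G=A+BK+W_0G$. Writing $A_K\triangleq A+BK$, this gives
\begin{equation*}
A_K\bar H=(X_1-W_0)F_1 .
\end{equation*}
This replaces the basis $\Phi^\dagger$ parameterization by a single free variable $F_1$. It is the step that makes the problem linear in the decision variables.

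\emph{Step 2 (first half: $A_K^TPA_K\preceq \bar H^{-1}$).} Let $M\triangleq (X_1-W_0)F_1$. Applying Lemma~\ref{Lemma:Young} with $F=\bar H^{-1}$ and $\delta=\delta_1$ gives
\begin{equation*}
M\bar H^{-1}M^T\preceq(1+\delta_1)X_1F_1\bar H^{-1}F_1^TX_1^T+(1+\delta_1^{-1})W_0F_1\bar H^{-1}F_1^TW_0^T .
\end{equation*}
A Schur complement of \eqref{eq:SDP_3} gives $F_1\bar H^{-1}F_1^T\preceq I$. Hence the last term is at most $(1+\delta_1^{-1})W_0W_0^T$, which by Assumption~\ref{Assumption:SNR} is at most $\gamma_w(1+\delta_1^{-1})X_1X_1^T$. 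A Schur complement of \eqref{eq:SDP_1} on its lower-right block then yields $M\bar H^{-1}M^T\preceq\bar P$, that is, $A_K\bar HA_K^T\preceq\bar P$. Writing this as the positivity of $\left[\begin{smallmatrix}\bar P & A_K\bar H\\ \bar HA_K^T & \bar H\end{smallmatrix}\right]$ and taking the Schur complement with respect to the other block, I obtain $\bar H-\bar HA_K^T\bar P^{-1}A_K\bar H\succeq0$. Congruence by $\bar H^{-1}$ then gives $A_K^TPA_K\preceq\bar H^{-1}$.

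\emph{Step 3 (second half and conclusion).} A Schur complement of \eqref{eq:SDP_2} gives
\begin{equation*}
\bar H-(U_0F_1)^TRU_0F_1-\bar HQ\bar H\succeq0 .
\end{equation*}
Since $U_0F_1=K\bar H$, congruence by $\bar H^{-1}$ gives $Q+K^TRK\preceq\bar H^{-1}$. Constraint \eqref{eq:SDP_4} together with operator monotonicity of the inverse gives $\bar H^{-1}\preceq\tfrac12\bar P^{-1}=\tfrac12P$. Summing the two halves gives
\begin{equation*}
A_K^TPA_K-P+Q+K^TRK\preceq \tfrac12P+\tfrac12P-P=0,
\end{equation*}
which is exactly \eqref{eq:Lyapunov Condition}. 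Finally, a Schur complement of \eqref{eq:SDP_6} gives $Z_P\succeq\bar P^{-1}=P$, so minimizing $\operatorname{trace}(Z_P)$ minimizes an upper bound on $\operatorname{trace}(P)$.

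The main obstacle is Step 2. The noise term $W_0$ enters the closed loop bilinearly with $F_1$, and the required inequality is in the ``dual'' form $A_K\bar HA_K^T\preceq\bar P$ rather than the primal form. The bilinearity must be broken by Young's inequality combined with the normalization \eqref{eq:SDP_3}, and the dual bound must be converted back to the primal form by the double Schur complement. I would verify that the weights $(1+\delta_1)$ and $(1+\delta_1^{-1})$ land on the correct terms, so that they match \eqref{eq:SDP_1}.
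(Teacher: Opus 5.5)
Your proof is correct and follows essentially the same route as the paper's: the data parameterization $(X_1-W_0)F_1\bar H^{-1}=A+BK$ from \eqref{eq:SDP_5}, then Young's inequality together with \eqref{eq:SDP_3}, Assumption~\ref{Assumption:SNR} and \eqref{eq:SDP_1} to obtain $A_K\bar H A_K^T\preceq\bar P$, and \eqref{eq:SDP_2} with \eqref{eq:SDP_4} to obtain $Q+K^TRK\preceq\bar H^{-1}\preceq\bar P^{-1}-\bar H^{-1}$, with the two bounds combined through a Schur complement. Your version differs only in spelling out the reverse Schur complement, the congruence by $\bar H^{-1}$, and the placement of the $(1+\delta_1)$ and $(1+\delta_1^{-1})$ weights, all of which the paper leaves implicit.
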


\begin{proof}
    Applying the Schur complement to Eq.~\eqref{eq:SDP_1} and using Assumption~\ref{Assumption:SNR}, we obtain
    \begin{equation}
        \bar{P}-(1+\delta_1^{-1})W_0W_0^T-(1+\delta_1)X_1F_1\bar{H}^{-1}F_1^TX_1^T\succeq0.
    \end{equation}
    From Eq.~\eqref{eq:SDP_5}, it follows that $(X_1-W_0)F_1\bar{H}^{-1} = (AX_0 + BU_0)F_1\bar{H}^{-1} = A+BK$.
    By leveraging Eq.~\eqref{eq:SDP_3} and Lemma~\ref{Lemma:Young}, we derive
    \begin{equation}\label{eq:LMI_1}
        \begin{aligned}
            &\bar{P}-(X_1-W_0)F_1\bar{H}^{-1} \bar{H} \bar{H}^{-1}F_1^T(X_1-W_0)^T\succeq 0
        \end{aligned}
    \end{equation}
    Similarly, applying the Schur complement to Eq.~\eqref{eq:SDP_2} and combining the result with Eq.~\eqref{eq:SDP_4}, we obtain
    \begin{equation}\label{eq:LMI_2}
        \begin{aligned}
            & \bar{P}^{-1} - \bar{H}^{-1} \succeq \bar{H}^{-1}\succeq Q+(U_0F_1\bar{H}^{-1})^TRU_0F_1\bar{H}^{-1}.
        \end{aligned}
    \end{equation}
    Substituting \eqref{eq:LMI_2} into \eqref{eq:LMI_1} and using the Schur complement, we obtain
    \begin{equation}
        (A+BK)^T\bar{P}^{-1}(A+BK)-\bar{P}^{-1}\preceq-(Q+K^TRK).
    \end{equation}
    Letting $P=\bar{P}^{-1}$ confirms that the condition in Eq.~\eqref{eq:Lyapunov Condition} is satisfied.
    Finally, applying the Schur complement to Eq.~\eqref{eq:SDP_6} gives
    \begin{equation}
        Z_P-\bar{P}^{-1}\succeq0.
    \end{equation}
    Therefore, $Z_P\succeq P$ and $\operatorname{trace}(Z_P)\geq\operatorname{trace}(P)$.
    For any fixed feasible $\bar{P}$, $F_1$, and $\bar{H}$, choosing $Z_P=\bar{P}^{-1}=P$ satisfies Eq.~\eqref{eq:SDP_6} and attains the minimum with respect to $Z_P$.
    Hence, minimizing $\operatorname{trace}(Z_P)$ is equivalent to minimizing $\operatorname{trace}(P)$ over the feasible controller designs.
\end{proof}

\begin{remark}\label{rmk:hyperparameter}
    The scalar $\delta_1>0$ provides an algebraic degree of freedom, which originates from the application of Lemma~\ref{Lemma:Young}.
    Since extreme values render the LMI infeasible by disproportionately magnifying performance penalties or uncertainty bounds, the optimal value of $\delta_1$ lies strictly within a finite practical range.
    Consequently, $\delta_1$ and subsequent hyperparameters can be efficiently determined via a 1-D logarithmic grid search.
\end{remark}

\subsection{Computation of the RPI Set}
In this section, we focus on the computation of the RPI set $\mathcal{S}_x$ when the full state vector $x_k$ is measurable.
The RPI set is parameterized as an ellipsoid:
\begin{equation}
    \mathcal{S}_x=\{x\in\mathbb{R}^n\mid x^TQ_S^{-1}x\leq1\},
\end{equation}
where $Q_S\succ0$ is the decision variable.
To reduce the ellipsoid size, we minimize $\operatorname{trace}(Q_S)$ as a convex size surrogate.

Note that the feedback gain $K$ is computed first via Proposition~\ref{proposition:K}.
This sequential design may lose some optimality in minimizing the RPI set size compared with joint synthesis.
However, it provides practical advantages: it retains the LQ-type gain and avoids the non-convex coupling between gain synthesis and invariant-set computation \cite{chen2021data,mulagaleti2021data}.
Moreover, with $K$ fixed, the RPI set can be optimized directly through its ellipsoidal shape matrix, without introducing fixed polyhedral templates or preselected supporting directions.
Based on this structure, we establish the following result.

\begin{theorem}\label{theorem:QS}
    For a given scalar $\alpha\in(0,1)$, if the following SDP is feasible, then its solution $Q_S\succ 0$ defines an RPI set $\mathcal S_x$.
    \begin{subequations}
        \begin{align}
            \min_{Q_S,\lambda}\quad &\operatorname{trace}(Q_S)\\
            \text{s.t.}\quad
            &\begin{bmatrix}
                (1-\alpha)Q_S & 0 & F_2^TX_1^T & F_2^T\\
                0 & \alpha Q_w^{-1} & I & 0\\
                X_1F_2 & I & Q_S-\lambda\gamma_wX_1X_1^T & 0\\
                F_2 & 0 & 0 & \lambda I
            \end{bmatrix}\succeq0\label{eq:SDP2_1}\\
            &F_2=(\Phi_1^\dagger+\Phi_2^\dagger K)Q_S\label{eq:SDP2_2}
        \end{align}
    \end{subequations}
    where $\lambda>0$ is an auxiliary decision variable.
\end{theorem}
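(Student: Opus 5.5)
The plan is to turn the set-inclusion requirement into a quadratic implication, relax it with the S-procedure (Lemma~\ref{Lemma:S-proc}), and then remove the unknown noise matrix $W_0$ using Young's inequality (Lemma~\ref{Lemma:Young}) and Assumption~\ref{Assumption:SNR}. This mirrors the proof of Proposition~\ref{proposition:K}.

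Write $A_K=A+BK$. Since $[X_0;U_0][\Phi_1^\dagger\ \Phi_2^\dagger]=I$, we have $A_K=(X_1-W_0)(\Phi_1^\dagger+\Phi_2^\dagger K)$. By \eqref{eq:SDP2_2} this gives $A_KQ_S=(X_1-W_0)F_2$. Invariance of $\mathcal S_x$ means the following implication must hold:
\[
x^TQ_S^{-1}x\le1,\ \ w^TQ_w^{-1}w\le1 \implies (A_Kx+w)^TQ_S^{-1}(A_Kx+w)\le1 .
\]
I apply Lemma~\ref{Lemma:S-proc} with the fixed multipliers $1-\alpha$ and $\alpha$. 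Their constant terms cancel, since $1-(1-\alpha)-\alpha=0$. The implication therefore follows from the homogeneous matrix inequality
\[
\begin{bmatrix}(1-\alpha)Q_S^{-1}&0\\0&\alpha Q_w^{-1}\end{bmatrix}-\begin{bmatrix}A_K^T\\ I\end{bmatrix}Q_S^{-1}\begin{bmatrix}A_K& I\end{bmatrix}\succeq0 .
\]
I then apply a congruence with $\operatorname{diag}(Q_S,I)$ and take a Schur complement with respect to $Q_S\succ0$. This shows it suffices that
\[
\mathcal M\triangleq\begin{bmatrix}(1-\alpha)Q_S&0&F_2^T(X_1-W_0)^T\\0&\alpha Q_w^{-1}&I\\(X_1-W_0)F_2&I&Q_S\end{bmatrix}\succeq0 .
\]

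The main obstacle is that $W_0$ is unknown. To handle it, split $\mathcal M=\mathcal M_0-(UV^T+VU^T)$. Here $\mathcal M_0$ is the same matrix with $X_1$ in place of $X_1-W_0$, and
\[
U=\begin{bmatrix}0\\0\\W_0\end{bmatrix},\qquad V=\begin{bmatrix}F_2^T\\0\\0\end{bmatrix}.
\]
Lemma~\ref{Lemma:Young} with $F=I$ and $\delta=\lambda>0$ gives $UV^T+VU^T\preceq \lambda UU^T+\lambda^{-1}VV^T$. Hence
\[
\mathcal M\succeq \mathcal M_0-\lambda UU^T-\lambda^{-1}VV^T .
\]
By Assumption~\ref{Assumption:SNR}, the term $\lambda UU^T$ only affects the $(3,3)$ block, where $\lambda W_0W_0^T\preceq\lambda\gamma_wX_1X_1^T$. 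It therefore suffices that $\mathcal M_0-\operatorname{diag}(0,0,\lambda\gamma_wX_1X_1^T)-\lambda^{-1}VV^T\succeq0$.

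Finally, a Schur complement on the fourth block row and column of \eqref{eq:SDP2_1}, with pivot $\lambda I\succ0$, yields exactly this last inequality. The term $\lambda^{-1}F_2^TF_2$ appears in the $(1,1)$ block, and $-\lambda\gamma_wX_1X_1^T$ appears in the $(3,3)$ block. Chaining the implications backward then shows that feasibility of \eqref{eq:SDP2_1}--\eqref{eq:SDP2_2} gives $\mathcal M\succeq0$, then the S-procedure inequality, and hence $x_{k+1}\in\mathcal S_x$ for all $x_k\in\mathcal S_x$ and $w_k\in\mathcal W$. The argument also needs $Q_S\succ0$, so that $\mathcal S_x$ is a well-defined ellipsoid and the Schur and congruence steps are valid; the theorem assumes this for its solution.
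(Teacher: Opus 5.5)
Your proof is correct and follows essentially the same route as the paper. You take the Schur complement on the $\lambda I$ block, absorb $W_0$ via Lemma~\ref{Lemma:Young} (with $\delta=\lambda$, $F=I$) and Assumption~\ref{Assumption:SNR}, substitute $F_2$ with a second Schur complement/congruence, and finish with Lemma~\ref{Lemma:S-proc} using multipliers $1-\alpha$ and $\alpha$. The only differences are that you run the chain backward from the invariance implication and spell out the $U,V$ decomposition, which the paper leaves implicit.
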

\begin{proof}
    By applying the Schur complement to Eq.~\eqref{eq:SDP2_1} and incorporating Assumption~\ref{Assumption:SNR} and Lemma~\ref{Lemma:Young}, we obtain
    \begin{equation}
        \begin{bmatrix}
            (1-\alpha)Q_S & 0 & F_2^T(X_1-W_0)^T\\
            0 & \alpha Q_w^{-1} & I\\
            (X_1-W_0)F_2 & I & Q_S
        \end{bmatrix}\succeq0.
    \end{equation}
    By substituting Eq.~\eqref{eq:SDP2_2} and applying the Schur complement, we obtain
    \begin{equation}
        \begin{bmatrix}\Gamma\\I\end{bmatrix}
        Q_S^{-1}
        \begin{bmatrix}\Gamma\\I\end{bmatrix}^T
        -
        \begin{bmatrix}
            (1-\alpha)Q_S^{-1} & 0\\
            0 & \alpha Q_w^{-1}
        \end{bmatrix}\preceq0,
    \end{equation}
    where $\Gamma^T=(X_1-W_0)(\Phi_1^\dagger+\Phi_2^\dagger K)$.
    Pre- and post-multiplying both sides of the inequality by $\begin{bmatrix}x_k^T&w_k^T\end{bmatrix}$ and $\begin{bmatrix}x_k^T&w_k^T\end{bmatrix}^T$, respectively, yields
    \begin{equation}
        \begin{aligned}
            &\begin{bmatrix}x_k\\w_k\end{bmatrix}^T
            \left(
            \begin{bmatrix}(A+BK)^T\\I\end{bmatrix}
            Q_S^{-1}
            \begin{bmatrix}A+BK&I\end{bmatrix}\right.\\
            &\left.-
            \begin{bmatrix}
                (1-\alpha)Q_S^{-1} & 0\\
                0 & \alpha Q_w^{-1}
            \end{bmatrix}
            \right)
            \begin{bmatrix}x_k\\w_k\end{bmatrix}\preceq0.
        \end{aligned}
    \end{equation}
    This implies
    \begin{equation}
        x_{k+1}^TQ_S^{-1}x_{k+1}-(1-\alpha)x_k^TQ_S^{-1}x_k-\alpha w_k^TQ_w^{-1}w_k\leq0.
    \end{equation}
    From Lemma~\ref{Lemma:S-proc}, it follows that $x_{k+1}^TQ_S^{-1}x_{k+1}\leq1$ holds for all $x_k^TQ_S^{-1}x_k\leq1$ and $w_k^TQ_w^{-1}w_k\leq1$, confirming that $\mathcal{S}_x$ is an RPI set.
\end{proof}

%% file: body/4.IO.tex
\section{Output-Feedback Synthesis}
In this section, we extend the data-driven RPI synthesis framework to the output-feedback scenario.
By defining the estimation error as $e_k=x_k-\hat{x}_k$ and employing a Luenberger observer $\hat{x}_{k+1} = A\hat{x}_k + Bu_k + L(y_k - C\hat{x}_k)$ with the control law $u_k=K\hat{x}_k$, we obtain the augmented closed-loop dynamics
\begin{equation}\label{eq:xe}
    \begin{bmatrix} x_{k+1}\\ e_{k+1}  \end{bmatrix} = \begin{bmatrix} A + BK & -BK \\ 0 & A - LC \end{bmatrix} \begin{bmatrix} x_k\\e_k  \end{bmatrix} + \begin{bmatrix} I & 0 \\ I & -L \end{bmatrix} \begin{bmatrix} w_k \\ v_k     \end{bmatrix}.
\end{equation}
In \eqref{eq:xe}, the true state dynamics are intrinsically driven by the estimation error $e_k$. 
To manage this dependency, we adopt the robust output-feedback paradigm \cite{mayneRobustOutputFeedback2006} to systematically decouple the synthesis: we first compute a robust invariant set $\mathcal{S}_e$ for the error dynamics, and subsequently determine the overall RPI set $\tilde{\mathcal{S}}_x$ by treating $e_k \in \mathcal{S}_e$ as a bounded internal disturbance.
\begin{remark}
    The augmented closed-loop dynamics \eqref{eq:xe} are used to describe the observer-based closed-loop architecture and the associated estimation-error dynamics, which are subsequently used to synthesize the observer gain and invariant bounds.
    In the proposed formulation, the matrices $\{A,B,C\}$ are not explicitly identified, and the corresponding synthesis conditions are constructed directly from the available offline data.
\end{remark}

To facilitate the subsequent data-driven LMI synthesis, the following lemma is established to provide a joint characterization of the noise matrices $W_0$ and $V_0$.

\begin{lemma}\label{Lemma:SNR_all}
    Under Assumption~\ref{Assumption:SNR}, for any scalar $\epsilon > 0$, the joint noise matrix satisfies the following block-diagonal data-driven upper bound:
    \begin{equation}
        \begin{aligned}
            \begin{bmatrix} W_0 \\ V_0 \end{bmatrix} \begin{bmatrix} W_0 \\ V_0 \end{bmatrix}^T \preceq 
            \begin{bmatrix} \gamma_w(1+\epsilon)X_1 X_1^T & 0 \\ 0 & \gamma_v(1+\epsilon^{-1}) Y_0 Y_0^T \end{bmatrix}.
        \end{aligned}
    \end{equation}
\end{lemma}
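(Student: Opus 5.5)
The plan is to expand the joint Gram matrix into its diagonal and off-diagonal blocks and to control the cross terms with Lemma~\ref{Lemma:Young}. The diagonal blocks will then be bounded by Assumption~\ref{Assumption:SNR}. We write
\begin{equation*}
\begin{bmatrix} W_0 \\ V_0 \end{bmatrix}\begin{bmatrix} W_0 \\ V_0 \end{bmatrix}^T
=\begin{bmatrix} W_0W_0^T & 0\\ 0 & V_0V_0^T\end{bmatrix}
+\begin{bmatrix} 0 & W_0V_0^T\\ V_0W_0^T & 0\end{bmatrix}.
\end{equation*}
The second term is indefinite, so it has to be dominated by a block-diagonal matrix.

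For this, we set $\mathcal{X}=\begin{bmatrix} W_0\\ 0\end{bmatrix}$ and $\mathcal{Y}=\begin{bmatrix} 0\\ V_0\end{bmatrix}$, and we take $F=I_L$. Then $\mathcal{X}F\mathcal{Y}^T+\mathcal{Y}F\mathcal{X}^T$ is exactly the off-diagonal term above. Lemma~\ref{Lemma:Young} with scalar $\epsilon>0$ gives
\begin{equation*}
\begin{bmatrix} 0 & W_0V_0^T\\ V_0W_0^T & 0\end{bmatrix}\preceq \epsilon\begin{bmatrix} W_0W_0^T & 0\\ 0&0\end{bmatrix}+\epsilon^{-1}\begin{bmatrix} 0&0\\0& V_0V_0^T\end{bmatrix}.
\end{equation*}
Adding this to the diagonal part yields the bound $\operatorname{diag}\big((1+\epsilon)W_0W_0^T,\,(1+\epsilon^{-1})V_0V_0^T\big)$.

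Finally, we invoke Assumption~\ref{Assumption:SNR} blockwise: $W_0W_0^T\preceq\gamma_wX_1X_1^T$ and $V_0V_0^T\preceq\gamma_vY_0Y_0^T$. Because both scalings $1+\epsilon$ and $1+\epsilon^{-1}$ are positive, and a block-diagonal matrix of PSD-ordered blocks is PSD-ordered, the stated bound follows by transitivity of $\preceq$.

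There is no real obstacle here. The only point needing care is choosing $\mathcal{X}$ and $\mathcal{Y}$ as zero-padded block columns, so that Young's inequality separates the two noise channels exactly and leaves no residual cross terms.
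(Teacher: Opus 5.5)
Your proof is correct and is the paper's argument: the paper applies Young's inequality (Lemma~\ref{Lemma:Young}) to the cross terms $W_0V_0^T$ and $V_0W_0^T$ and then uses Assumption~\ref{Assumption:SNR} on each diagonal block. Your zero-padded block columns with $F=I_L$ and $\delta=\epsilon$ simply write out that one-line argument explicitly.
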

\begin{proof}
    The bound trivially holds by applying Lemma~\ref{Lemma:Young} to the cross terms $W_0 V_0^T$ and $V_0 W_0^T$.
\end{proof}

The scalar $\epsilon$ can be selected by minimizing the $\epsilon$-dependent part
\begin{equation}
    \epsilon \gamma_w \operatorname{tr} \left( X_1 X_1^T  \right) + \epsilon^{-1} \gamma_v \operatorname{tr} \left( Y_0 Y_0^T \right),
\end{equation}
which yields
\begin{equation}
\epsilon^\star = \sqrt{\frac{\gamma_v \operatorname{tr}(Y_0Y_0^T)}{\gamma_w \operatorname{tr}(X_1X_1^T)}}.
\end{equation}
This selection balances the relative scales of the diagonal blocks in Lemma~\ref{Lemma:SNR_all} and minimizes the trace of the $\epsilon$-dependent upper bound.


\subsection{Design of Controller and Observer Gains}
To avoid the non-convexities associated with concurrent output-feedback synthesis in \eqref{eq:xe}, we employ a sequential design strategy. 
With the data-driven state-feedback gain $K$ predetermined in Proposition~\ref{proposition:K}, we synthesize an observer gain $L$ that robustly stabilizes the error dynamics $e_{k+1} = (A-LC)e_k + w_k - Lv_k$ while reducing the worst-case estimation-error bound.

For the error dynamics, we seek a positive definite matrix $P_e$ and an observer gain $L$ such that the following condition holds:
\begin{equation}\label{eq:observer}
    (A-LC) P_e (A-LC)^T - P_e \preceq -(Q_e + L R_e L^T).
\end{equation}
Although \eqref{eq:observer} is algebraically dual to the state-feedback Lyapunov condition \eqref{eq:Lyapunov Condition}, it can also be interpreted as a covariance-type condition commonly used in robust Kalman filtering \cite{xie1994robust}.
Here, $Q_e \succ 0$ and $R_e \succ 0$ specify the relative weights assigned to the process and measurement noise contributions, respectively.
Accordingly, reducing the trace of $P_e$ provides a surrogate for reducing the guaranteed estimation-error bound across all state components.


The nominal closed-loop stability still follows from the standard separation principle.
However, the estimation error enters the physical-state dynamics through $-BK e_k$, and therefore the separate designs of $K$ and $L$ are generally suboptimal for minimizing the resulting invariant bounds.
This sequential structure should be interpreted as a practical trade-off between convex synthesis and global optimality.

Using \eqref{eq:Phi} in \eqref{eq:system:b} yields $C = (Y_0 - V_0) \Phi_1^\dagger$.
The closed-loop matrix in \eqref{eq:observer} can then be parameterized as
\begin{equation}\label{eq:A-LC}
    (A-LC)=(X_1-LY_0)\Phi_1^\dagger + (LV_0-W_0)\Phi_1^\dagger.
\end{equation}
Based on this data-driven parameterization, we establish the following theorem.
\begin{theorem}\label{theo:L}
    For a given scalar $\delta_2 >0$, the observer gain $L$ and the matrix $P_e$ satisfying Eq.~\eqref{eq:observer} can be obtained if there exist symmetric matrices $\bar{P}_e \in \mathbb{R}^{n\times n}\succ 0$ and $Z_e\in \mathbb{R}^{n\times n}$, and a matrix $F_e\in \mathbb{R}^{n\times n_y}$ such that the following SDP is feasible:
    \begin{subequations}\label{eq:L}
        \begin{align}
            \min_{\bar{P}_e, F_e,Z_e} \quad &  \text{trace}(Z_e) \\
            s.t. \quad 
            & \begin{bmatrix}
                    \Xi_{11} & \Xi_{12} & \Xi_{13} & \Xi_{14} & \Xi_{15} & \Xi_{16} \\
                    * & \Xi_{22} & 0 & 0 & 0 & 0 \\
                    * & * & \Xi_{33} & 0 & 0 & 0 \\
                    * & * & * & I & 0 & 0 \\
                    * & * & * & * & I & 0 \\
                    * & * & * & * & * & I
                \end{bmatrix} \succeq 0,  \label{eq:L1} \\
            &\begin{bmatrix}
                I & \Phi_1^\dagger\\
                (\Phi_1^\dagger)^T & \bar{P}_e
            \end{bmatrix} \succeq 0 , \label{eq:L2}\\
            &\begin{bmatrix}
                Z_e & I\\
                I & \bar{P}_e
            \end{bmatrix} \succeq 0. \label{eq:L3}
        \end{align}
    \end{subequations}
    where
    \begin{equation*}
        \begin{aligned}
            &\Xi_{11} = \bar{P}_e, \quad \Xi_{12} =  (\bar{P}_eX_1-F_eY_0) \Phi_1^\dagger,\\
            &\Xi_{13} = F_e Y_0, \quad \Xi_{14} = \bar{P}_e Q_e^{1/2}, \quad \Xi_{15} = F_e R_e^{1/2},\\
            &\Xi_{16} = \sqrt{(1+\delta_2^{-1})\gamma_w (1+\epsilon)} \bar{P}_e X_1, \quad \Xi_{22} = (1+\delta_2)^{-1}\bar{P}_e,\\
            &\Xi_{33} = (1+\delta_2^{-1})^{-1} \gamma_v^{-1} (1+\epsilon^{-1})^{-1}I.
        \end{aligned}
    \end{equation*}
    Then, the observer parameters are recovered as $P_e = \bar{P}_e^{-1}$ and $L =\bar{P}_e^{-1} F_e $.
\end{theorem}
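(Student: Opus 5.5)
The plan is to follow the pattern of the proof of Proposition~\ref{proposition:K}: congruence-transform the target inequality \eqref{eq:observer}, split the unknown closed-loop matrix via \eqref{eq:A-LC} into a data-known part and a noise part, and bound the noise part with Lemma~\ref{Lemma:Young} and Lemma~\ref{Lemma:SNR_all}. The result should be exactly the Schur complement of \eqref{eq:L1}.

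\textbf{Step 1: congruence transformation.} With $P_e=\bar P_e^{-1}$ and $L=\bar P_e^{-1}F_e$, pre- and post-multiply \eqref{eq:observer} by $\bar P_e$. It becomes equivalent to
\begin{equation*}
\bar P_e(A-LC)\bar P_e^{-1}(A-LC)^T\bar P_e-\bar P_e+\bar P_eQ_e\bar P_e+F_eR_eF_e^T\preceq0 .
\end{equation*}
By \eqref{eq:A-LC}, $\bar P_e(A-LC)=M+E$, where $M=(\bar P_eX_1-F_eY_0)\Phi_1^\dagger=\Xi_{12}$ is computable from data. The unknown part is
\begin{equation*}
E=\begin{bmatrix}-\bar P_e & F_e\end{bmatrix}\begin{bmatrix}W_0\\V_0\end{bmatrix}\Phi_1^\dagger .
\end{equation*}

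\textbf{Step 2: Young's inequality on the cross terms.} Apply Lemma~\ref{Lemma:Young} with $F=\bar P_e^{-1}$ and scalar $\delta_2$. This gives
\begin{equation*}
(M+E)\bar P_e^{-1}(M+E)^T\preceq(1+\delta_2)M\bar P_e^{-1}M^T+(1+\delta_2^{-1})E\bar P_e^{-1}E^T .
\end{equation*}
The first term is what the pair $(\Xi_{12},\Xi_{22})$ produces under a Schur complement.

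\textbf{Step 3: bounding the noise term (main obstacle).} This is the step I expect to be the delicate one, because the unknown $P_e$ sits inside the noise term. I would first use \eqref{eq:L2}: its Schur complement gives $\bar P_e\succeq(\Phi_1^\dagger)^T\Phi_1^\dagger$, which is equivalent to $\Phi_1^\dagger\bar P_e^{-1}(\Phi_1^\dagger)^T\preceq I$. Hence
\begin{equation*}
E\bar P_e^{-1}E^T\preceq\begin{bmatrix}-\bar P_e & F_e\end{bmatrix}\begin{bmatrix}W_0\\V_0\end{bmatrix}\begin{bmatrix}W_0\\V_0\end{bmatrix}^T\begin{bmatrix}-\bar P_e & F_e\end{bmatrix}^T .
\end{equation*}
Lemma~\ref{Lemma:SNR_all} then bounds the right-hand side by
\begin{equation*}
\gamma_w(1+\epsilon)\bar P_eX_1X_1^T\bar P_e+\gamma_v(1+\epsilon^{-1})F_eY_0Y_0^TF_e^T .
\end{equation*}
Multiplied by $(1+\delta_2^{-1})$, these two terms are exactly $\Xi_{16}\Xi_{16}^T$ and $\Xi_{13}\Xi_{33}^{-1}\Xi_{13}^T$. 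I need to check that the square-root scalings in $\Xi_{16}$ and $\Xi_{33}$ match these constants. I also need to check the ordering convention of $\Xi_{12}$ (it appears as the $(1,2)$ block multiplying $\Xi_{22}^{-1}$), so that its Schur term reads $(1+\delta_2)M\bar P_e^{-1}M^T$ and not a transposed version.

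\textbf{Step 4: closing the argument.} The Schur complement of \eqref{eq:L1} with respect to its lower-right block-diagonal part reads
\begin{equation*}
\bar P_e-(1+\delta_2)M\bar P_e^{-1}M^T-\Xi_{13}\Xi_{33}^{-1}\Xi_{13}^T-\bar P_eQ_e\bar P_e-F_eR_eF_e^T-\Xi_{16}\Xi_{16}^T\succeq0 .
\end{equation*}
Chaining Steps 2--3 shows this implies the transformed inequality of Step 1, and hence \eqref{eq:observer}. Finally, \eqref{eq:L3} gives $Z_e\succeq\bar P_e^{-1}=P_e$. As in Proposition~\ref{proposition:K}, minimizing $\operatorname{trace}(Z_e)$ is therefore equivalent to minimizing $\operatorname{trace}(P_e)$.
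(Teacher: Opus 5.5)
Your proposal is correct and follows essentially the same argument as the paper: Schur complements of \eqref{eq:L1}, Young's inequality with $\delta_2$ applied to the data/noise split \eqref{eq:A-LC}, Lemma~\ref{Lemma:SNR_all}, and \eqref{eq:L3} for the trace surrogate. The only differences are cosmetic. The paper applies the congruence $\bar P_e^{-1}$ to the LMI instead of applying $\bar P_e$ to \eqref{eq:observer}. You also state explicitly where \eqref{eq:L2} enters (via $\Phi_1^\dagger\bar P_e^{-1}(\Phi_1^\dagger)^T\preceq I$), which the paper's proof uses only tacitly, and the scaling and ordering checks you flagged in Step~3 go through as you expected.
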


\begin{proof}
    Taking the Schur complement of the identity blocks in the LMI and applying the congruence transformation with $\operatorname{diag}\{\bar{P}_e^{-1}, I, I\}$ yield
    \begin{equation}
        \begin{bmatrix}
            \Psi &  (X_1-LY_0) \Phi_1^\dagger  & LY_0\\
            * & (1+\delta_2)^{-1}\bar{P}_e &0\\
            * & * & (1+\delta_2^{-1})^{-1} \gamma_v^{-1} (1+\frac{1}{\epsilon})^{-1}I
        \end{bmatrix} \succeq 0,
    \end{equation}
    where $\Psi = \bar{P}^{-1}_e - Q_e - \bar{P}^{-1}_e F_e R_e F_e^T \bar{P}^{-1}_e -(1+\delta_2^{-1})\gamma_w (1+\epsilon) X_1X_1^T$.
    By applying the Schur complement again and using Lemma~\ref{Lemma:SNR_all} and Lemma~\ref{Lemma:Young}, we obtain
    \begin{equation}
        \begin{aligned}
            & - \begin{bmatrix} I \\-L^T  \end{bmatrix}^T \begin{bmatrix} X_1-W_0 \\Y_0-V_0\end{bmatrix} \Phi_1^\dagger \bar{P}^{-1}_e (\Phi_1^\dagger)^T \begin{bmatrix} X_1-W_0 \\Y_0-V_0\end{bmatrix}^T \begin{bmatrix} I \\-L^T  \end{bmatrix}\\
            &+ \bar{P}^{-1}_e - (Q_e + L R_e L^T) \succeq 0,
        \end{aligned}
    \end{equation}
    which is equivalent to
    \begin{equation}
        \bar{P}^{-1}_e - (Q_e + L R_e L^T) - (A-LC) \bar{P}^{-1}_e (A-LC)^T \succeq 0.
    \end{equation}
    By setting $P_e=\bar{P}_e^{-1}$, we obtain
    \begin{equation}
        (A-LC)P_e(A-LC)^T - P_e \preceq -(Q_e + L R_e L^T),
    \end{equation}
    which proves \eqref{eq:observer}.
    Finally, applying the Schur complement to Eq.~\eqref{eq:L3} gives $Z_e-\bar{P}_e^{-1}\succeq0$.
    Therefore, $Z_e\succeq P_e$ and $\operatorname{trace}(Z_e)\geq\operatorname{trace}(P_e)$.
    For any fixed feasible $\bar{P}_e$ and $F_e$, choosing $Z_e=\bar{P}_e^{-1}=P_e$ satisfies Eq.~\eqref{eq:L3} and attains the minimum with respect to $Z_e$.
    Hence, minimizing $\operatorname{trace}(Z_e)$ is equivalent to minimizing $\operatorname{trace}(P_e)$ over the feasible observer designs.
\end{proof}

\subsection{Characterization of the Estimation-Error RPI Set}
To solve Problem~\ref{problem:2}, we first consider the invariant set of the error dynamics in \eqref{eq:xe}, which are driven by both process and measurement noise.
Let $\mathcal{S}_e = \{e \mid e^T Q_{S_e}^{-1} e \leq 1\}$ be the RPI set of the estimation error $e_k$.
Then we propose the following theorem.

\begin{theorem}\label{theo:QS_e}
    Given scalars $0 < \alpha_1, \alpha_2,\alpha_3 < 1$ with $\alpha_1 + \alpha_2 + \alpha_3 = 1$, if there exist scalars $\lambda_1 >0$ and $\lambda_2 >0$ such that the following SDP is feasible, then $Q_{S_e}\succ 0$ characterizes the RPI set of the estimation error $e_k$:
    \setlength{\arraycolsep}{2.5pt}
    \begin{equation}\label{eq:QS_e}
        \begin{aligned}
            &\min_{Q_{S_e}, \lambda_1, \lambda_2}  \text{trace}(Q_{S_e})\\
            &\begin{bmatrix} 
                \alpha_1 Q_{S_e}  &0&0& \Lambda_{14} &Q_{S_e}(\Phi_1^\dagger)^T & Q_{S_e}(\Phi_1^\dagger)^T\\
                *&\alpha_2 Q_{w}^{-1} &0 &I & 0 & 0\\
                *&*& \alpha_3 Q_{v}^{-1} &-L^T & 0 & 0\\
                * &*&*& \Lambda_{44} & 0 & 0\\
                * & * &*&*&\lambda_2 I & 0\\
                * &* &* &*&*&\lambda_1 I
            \end{bmatrix}\succeq 0
        \end{aligned}
    \end{equation}
    where
    \begin{equation*}
        \begin{aligned}
            &\Lambda_{14} = Q_{S_e}[(X_1-LY_0)\Phi_1^\dagger] ^T, \\
            &\Lambda_{44} = Q_{S_e}-\lambda_1 \gamma_w (1+\epsilon) X_1X_1^T-\lambda_2 \gamma_v (1+\frac{1}{\epsilon}) LY_0Y_0^TL^T.
        \end{aligned}
    \end{equation*}
\end{theorem}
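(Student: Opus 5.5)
The plan is to follow the proof of Theorem~\ref{theorem:QS}, replacing the single noise channel by the joint process/measurement channel and the bound of Lemma~\ref{Lemma:SNR_all}. The target inequality is
\begin{equation*}
e_{k+1}^TQ_{S_e}^{-1}e_{k+1}-\alpha_1 e_k^TQ_{S_e}^{-1}e_k-\alpha_2 w_k^TQ_w^{-1}w_k-\alpha_3 v_k^TQ_v^{-1}v_k\le 0,
\end{equation*}
with $e_{k+1}=(A-LC)e_k+w_k-Lv_k$. Given this inequality, Lemma~\ref{Lemma:S-proc} with multipliers $\alpha_1,\alpha_2,\alpha_3$ and $\alpha_1+\alpha_2+\alpha_3=1$ gives $e_{k+1}^TQ_{S_e}^{-1}e_{k+1}\le 1$ whenever $e_k\in\mathcal S_e$, $w_k\in\mathcal W$, $v_k\in\mathcal V$.

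\textbf{Step 1: eliminate the $\lambda$ blocks.} Take the Schur complement of \eqref{eq:QS_e} with respect to the two blocks $\lambda_1 I$ and $\lambda_2 I$. The $(1,1)$ block loses $(\lambda_1^{-1}+\lambda_2^{-1})Q_{S_e}(\Phi_1^\dagger)^T\Phi_1^\dagger Q_{S_e}$, while the $\lambda$-weighted terms $\lambda_1\gamma_w(1+\epsilon)X_1X_1^T$ and $\lambda_2\gamma_v(1+\epsilon^{-1})LY_0Y_0^TL^T$ remain in $\Lambda_{44}$.

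\textbf{Step 2: insert the unknown noise via Lemma~\ref{Lemma:Young}.} By \eqref{eq:A-LC}, the true off-diagonal block is
\begin{equation*}
Q_{S_e}(A-LC)^T=\Lambda_{14}+Q_{S_e}(\Phi_1^\dagger)^T(LV_0-W_0)^T .
\end{equation*}
I would split the perturbation $(LV_0-W_0)\Phi_1^\dagger Q_{S_e}$ into its $W_0$ part and its $LV_0$ part. Each part is a rank-structured term of the form $\mathcal{E}\,G+G^T\mathcal{E}^T$ in the full block matrix. Applying Lemma~\ref{Lemma:Young} to each part, with weights $\lambda_1$ and $\lambda_2$ respectively, bounds it by:
\begin{itemize}[label={}]
\item a $\lambda^{-1}$-weighted $Q_{S_e}(\Phi_1^\dagger)^T\Phi_1^\dagger Q_{S_e}$ term in the $(1,1)$ position, which Step~1 already absorbed;
\item a $\lambda$-weighted $W_0W_0^T$ or $LV_0V_0^TL^T$ term in the $(4,4)$ position.
\end{itemize}

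\textbf{Step 3: bound the $(4,4)$ terms by data.} I would like to use Lemma~\ref{Lemma:SNR_all} to dominate these noise terms by $\gamma_w(1+\epsilon)X_1X_1^T$ and $\gamma_v(1+\epsilon^{-1})LY_0Y_0^TL^T$. Lemma~\ref{Lemma:SNR_all} bounds the joint matrix, so its diagonal blocks give $W_0W_0^T\preceq\gamma_w(1+\epsilon)X_1X_1^T$ and $V_0V_0^T\preceq\gamma_v(1+\epsilon^{-1})Y_0Y_0^T$; congruence with $L$ handles the second. (Assumption~\ref{Assumption:SNR} alone would already give these, so the $(1+\epsilon)$ factors are mere slack, and the argument stays valid.) This yields
\begin{equation*}
\begin{bmatrix}\alpha_1Q_{S_e}&0&0&Q_{S_e}(A-LC)^T\\ *&\alpha_2Q_w^{-1}&0&I\\ *&*&\alpha_3Q_v^{-1}&-L^T\\ *&*&*&Q_{S_e}\end{bmatrix}\succeq0 .
\end{equation*}

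\textbf{Step 4: congruence and Schur complement.} Apply the congruence $\operatorname{diag}(Q_{S_e}^{-1},I,I,I)$ and take the Schur complement on $Q_{S_e}$. This gives
\begin{equation*}
\begin{bmatrix}(A-LC)^T\\ I\\ -L^T\end{bmatrix}Q_{S_e}^{-1}\begin{bmatrix}A-LC&I&-L\end{bmatrix}\preceq\operatorname{diag}(\alpha_1Q_{S_e}^{-1},\alpha_2Q_w^{-1},\alpha_3Q_v^{-1}).
\end{equation*}
Pre- and post-multiplying by $[e_k^T\ w_k^T\ v_k^T]$ and its transpose yields the target inequality, and the S-procedure step above finishes the proof.

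\textbf{Main obstacle.} The delicate point is Step~2: arranging the Young splitting so the weights match exactly. The $\lambda_1^{-1}$ and $\lambda_2^{-1}$ factors must land in the two separate $\Phi_1^\dagger$ columns, while $\lambda_1,\lambda_2$ multiply the data bounds in $\Lambda_{44}$. I would first try writing each perturbation as $\mathcal{E}\,G+G^T\mathcal{E}^T$, where $\mathcal{E}$ is supported on row block 4 and $G$ on column block 1, and apply Lemma~\ref{Lemma:Young} with $\delta=\lambda_i$ to each separately. I would check the $W_0$ sign and the $L$ congruence carefully.
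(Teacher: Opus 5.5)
Your proposal is correct and follows essentially the same route as the paper. The chain is the same: remove the $\lambda_1 I,\lambda_2 I$ blocks by a Schur complement, absorb the $W_0$ and $LV_0$ parts of $A-LC$ via Lemma~\ref{Lemma:Young} and the noise bounds, then apply a Schur complement on $Q_{S_e}$, the quadratic form in $(e_k,w_k,v_k)$, and the S-procedure; the only ordering difference is that the paper applies the congruence with $\operatorname{diag}(Q_{S_e}^{-1},I,\dots,I)$ before the first Schur complement rather than after. Your side remark is also right: the two noise parts get separate Young multipliers, so only the diagonal blocks of Lemma~\ref{Lemma:SNR_all} are needed, these already follow from Assumption~\ref{Assumption:SNR}, and the $(1+\epsilon)$ factors are therefore pure slack.
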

\begin{proof}
    Multiplying both sides by $diag([Q^{-1}_{S_e},I,I,I,I,I])$ and applying the Schur complement yields
    \begin{equation}
        \begin{bmatrix} 
            \Lambda_{11} &0&0&[(X_1-LY_0)\Phi_1^\dagger] ^T\\
            *&\alpha_2 Q_{w}^{-1} &0 &I\\
            *&*& \alpha_3 Q_{v}^{-1} &-L^T\\
            * &*&*& \Lambda_{44}
        \end{bmatrix}\succeq 0,
    \end{equation}
    where
    $\Lambda_{11} = \alpha_1 Q_{S_e}^{-1}-\lambda_1^{-1} (\Phi_1^\dagger)^T \Phi_1^\dagger -\lambda_2^{-1}(\Phi_1^\dagger)^T \Phi_1^\dagger$ and
    $\Lambda_{44} = Q_{S_e}-\lambda_1 \gamma_w (1+\epsilon) X_1X_1^T-\lambda_2 \gamma_v (1+\frac{1}{\epsilon}) LY_0Y_0^TL^T$.
    Using Lemma~\ref{Lemma:Young}, Lemma~\ref{Lemma:SNR_all}, and \eqref{eq:A-LC}, we obtain
    \begin{equation}
        \begin{aligned}
            \begin{bmatrix} 
            \alpha_1 Q_{S_e}^{-1} &0&0& [A-LC] ^T\\
            *&\alpha_2 Q_{w}^{-1} &0 &I\\
            *&*& \alpha_3 Q_{v}^{-1} &-L^T\\
            *&*&*& Q_{S_e}
            \end{bmatrix}
            \succeq 0.
        \end{aligned}
    \end{equation}
    We apply the Schur complement again and then pre- and post-multiply both sides of the inequality by $\begin{bmatrix}e_k^T&w_k^T&v_k^T\end{bmatrix}$ and $\begin{bmatrix}e_k^T&w_k^T&v_k^T\end{bmatrix}^T$, respectively, to obtain
    \begin{equation}
        e_{k+1}^T Q_{S_e}^{-1} e_{k+1} - \alpha_1 e_{k}^T Q_{S_e}^{-1} e_{k} - \alpha_2 w_k^T Q_w^{-1} w_k -\alpha_3 v_k^T Q_v^{-1} v_k \leq 0.
    \end{equation} 
    From Lemma~\ref{Lemma:S-proc}, it follows that $e_{k+1}\in \mathcal{S}_e$ for all $e_k\in \mathcal{S}_e$, $w_k\in \mathcal{W}$, and $v_k\in \mathcal{V}$.
\end{proof}

\subsection{Characterization of the State RPI Set}
Based on the above discussion, let $\tilde{\mathcal{S}}_x = \{ x \mid x^T Q_{\tilde{S}}^{-1} x \leq 1 \}$ denote an RPI set for the true state $x_k$ in system \eqref{eq:xe}.
Then we propose the following theorem.

\begin{theorem}
    Given scalars $0 < \beta_1, \beta_2,\beta_3 < 1$ satisfying $\beta_1 + \beta_2 + \beta_3 = 1$, if there exist scalars $\mu_1 >0$ and $\mu_2 >0$ such that the following SDP is feasible, then $Q_{\tilde{S}}\succ 0$ characterizes the RPI set $\tilde{\mathcal{S}}_x$:
    \setlength{\arraycolsep}{2.5pt}
    \begin{equation}\label{eq:Q_tilde_S}
        \begin{aligned}
            &\min_{Q_{\tilde{S}}, \mu_1, \mu_2}  \text{trace}(Q_{\tilde{S}})\\
            s.t. \quad
            & \begin{bmatrix}
                \beta_1 Q_{\tilde{S}} & 0 & 0 & \Upsilon_{14} & \Upsilon_{15} & 0 \\
                * & \beta_2 Q_{S_e}^{-1} & 0 & \Upsilon_{24} & 0 & (\Phi_2^\dagger K)^T \\
                * & * & \beta_3 Q_w^{-1} & I & 0 & 0 \\
                * & * & * & \Upsilon_{44} & 0 & 0 \\
                * & * & * & * & \mu_1 I & 0 \\
                * & * & * & * & * & \mu_2 I
            \end{bmatrix} \succeq 0
        \end{aligned}
    \end{equation}
    where
    \begin{equation*}
        \begin{aligned}
            &\Upsilon_{14} = Q_{\tilde{S}}(X_1(\Phi_1^\dagger +\Phi_2^\dagger K))^T,\ \Upsilon_{15} = Q_{\tilde{S}}(\Phi_1^\dagger +\Phi_2^\dagger K)^T,\\
            &\Upsilon_{24} = (-X_1\Phi_2^\dagger K)^T,\ \Upsilon_{44} = Q_{\tilde{S}} - \gamma_w (1+\epsilon) (\mu_1+\mu_2) X_1X_1^T.
        \end{aligned}
    \end{equation*}
\end{theorem}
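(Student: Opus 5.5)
The argument follows the pattern of Theorem~\ref{theorem:QS} and Theorem~\ref{theo:QS_e}: first turn the LMI \eqref{eq:Q_tilde_S} into a dissipation-type inequality for the true-state dynamics in \eqref{eq:xe}, then conclude invariance with the S-procedure. From \eqref{eq:xe}, $x_{k+1}=(A+BK)x_k-BKe_k+w_k$. Using \eqref{eq:Phi}, $A+BK=(X_1-W_0)(\Phi_1^\dagger+\Phi_2^\dagger K)$ and $BK=(X_1-W_0)\Phi_2^\dagger K$. The target inequality is therefore
\begin{equation*}
x_{k+1}^TQ_{\tilde S}^{-1}x_{k+1}-\beta_1x_k^TQ_{\tilde S}^{-1}x_k-\beta_2e_k^TQ_{S_e}^{-1}e_k-\beta_3w_k^TQ_w^{-1}w_k\le0 .
\end{equation*}
Once this holds, Lemma~\ref{Lemma:S-proc} with the three constraints $x_k\in\tilde{\mathcal S}_x$, $e_k\in\mathcal S_e$ (from Theorem~\ref{theo:QS_e}) and $w_k\in\mathcal W$, together with $\beta_1+\beta_2+\beta_3=1$, gives $x_{k+1}^TQ_{\tilde S}^{-1}x_{k+1}\le1$. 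This is exactly the second requirement of Problem~\ref{problem:2}.

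\emph{Step 1.} Apply the congruence $\operatorname{diag}(Q_{\tilde S}^{-1},I,I,I,I,I)$ to \eqref{eq:Q_tilde_S}. The $(1,4)$ block becomes $(X_1(\Phi_1^\dagger+\Phi_2^\dagger K))^T$, the $(1,5)$ block becomes $(\Phi_1^\dagger+\Phi_2^\dagger K)^T$, and the $(1,1)$ block becomes $\beta_1Q_{\tilde S}^{-1}$. Taking the Schur complement of the $\mu_1I$ and $\mu_2I$ blocks leaves a $4\times4$ block matrix. Its first three diagonal blocks are reduced to
\begin{align*}
&\beta_1Q_{\tilde S}^{-1}-\mu_1^{-1}(\Phi_1^\dagger+\Phi_2^\dagger K)^T(\Phi_1^\dagger+\Phi_2^\dagger K),\\
&\beta_2Q_{S_e}^{-1}-\mu_2^{-1}(\Phi_2^\dagger K)^T\Phi_2^\dagger K,\\
&\beta_3Q_w^{-1}.
\end{align*}
The last column is $[(X_1\Gamma_1)^T;\ (-X_1\Gamma_2)^T;\ I;\ \Upsilon_{44}]$, where $\Gamma_1=\Phi_1^\dagger+\Phi_2^\dagger K$ and $\Gamma_2=\Phi_2^\dagger K$.

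\emph{Step 2.} This is the main step: replace $X_1$ by $X_1-W_0$ in the off-diagonal blocks, producing the matrix
\begin{equation*}
\begin{bmatrix}\beta_1Q_{\tilde S}^{-1}&0&0&((X_1-W_0)\Gamma_1)^T\\ *&\beta_2Q_{S_e}^{-1}&0&(-(X_1-W_0)\Gamma_2)^T\\ *&*&\beta_3Q_w^{-1}&I\\ *&*&*&Q_{\tilde S}\end{bmatrix}\succeq0.
\end{equation*}
To justify this, write the desired matrix as the Step~1 matrix plus the removed positive terms, minus a rank-structured perturbation. The perturbation is built from $W_0$ times the column $[\Gamma_1\ \ -\Gamma_2\ \ 0\ \ 0]$ in the fourth block row and column. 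There are two cross terms, one for the $x$-channel and one for the $e$-channel. I would bound each separately with Lemma~\ref{Lemma:Young}, using $\mu_1$ for the first and $\mu_2$ for the second. This gives $\mu_i^{-1}\Gamma_i^T\Gamma_i$ on the respective diagonal blocks and $\mu_iW_0W_0^T$ on the $(4,4)$ block. Then $W_0W_0^T\preceq\gamma_w(1+\epsilon)X_1X_1^T$ follows from Lemma~\ref{Lemma:SNR_all}, or more crudely from Assumption~\ref{Assumption:SNR} since $1+\epsilon>1$. The $(4,4)$ deficit $\gamma_w(1+\epsilon)(\mu_1+\mu_2)X_1X_1^T$ in $\Upsilon_{44}$ exactly absorbs these contributions.

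The delicate point is the Young/S-lemma splitting. The perturbation $\mathcal{E}$ has the form $u\,W_0^T e_4^T+e_4 W_0 u^T$, with $u$ the stacked $[\Gamma_1;-\Gamma_2;0]$-type column. To get the separate $\mu_1$ and $\mu_2$ weights, I would split $u$ into its $x$-part and $e$-part and apply the inequality $XY^T+YX^T\preceq\mu^{-1}XX^T+\mu YY^T$ to each part. This yields precisely the block-diagonal terms removed in Step~1. Careful sign bookkeeping is needed in the $e$-channel because of the minus sign on $BK$.

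\emph{Step 3.} Take the Schur complement of the $Q_{\tilde S}$ block of the Step~2 matrix. This gives
\begin{equation*}
\begin{bmatrix}(A+BK)^T\\-(BK)^T\\I\end{bmatrix}Q_{\tilde S}^{-1}\begin{bmatrix}A+BK&-BK&I\end{bmatrix}\preceq\operatorname{diag}(\beta_1Q_{\tilde S}^{-1},\beta_2Q_{S_e}^{-1},\beta_3Q_w^{-1}).
\end{equation*}
Pre- and post-multiplying by $[x_k^T\ e_k^T\ w_k^T]$ and its transpose yields the target inequality, and the S-procedure argument above finishes the proof.
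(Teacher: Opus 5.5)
Your proposal is correct and follows essentially the same route as the paper's proof. The shared steps are: congruence with $\operatorname{diag}(Q_{\tilde S}^{-1},I,\dots,I)$ and Schur complements of the $\mu_1 I$ and $\mu_2 I$ blocks; Lemma~\ref{Lemma:Young} applied separately to the $x$-channel and $e$-channel $W_0$ cross terms, absorbed by $\gamma_w(1+\epsilon)(\mu_1+\mu_2)X_1X_1^T$ via Lemma~\ref{Lemma:SNR_all}; a final Schur complement on the $Q_{\tilde S}$ block; and the S-procedure. Your explicit channel-wise splitting of the perturbation, and your remark that Assumption~\ref{Assumption:SNR} alone already bounds $W_0W_0^T$ as needed, spell out what the paper compresses into a single step.
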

\begin{proof}

    We apply the Schur complement to \eqref{eq:Q_tilde_S}, pre- and post-multiply the resulting matrix by $\text{diag}(Q_{\tilde{S}}^{-1}, I, I, I, I)$, and repeatedly apply the Schur complement to obtain
    \begin{equation}
        \begin{aligned}
            &\begin{bmatrix} 
                \beta_1 Q_{\tilde{S}}^{-1} &0&0 & Q_{\tilde{S}}^{-1}\Upsilon_{14} \\ 
                *&\beta_2Q_{S_e}^{-1} &0 & (-X_1\Phi_2^\dagger K)^T\\
                *&*&\beta_3 Q_w^{-1} &I\\
                * & * & * & Q_{\tilde{S}}
            \end{bmatrix} \\
            -&
            \mu_1^{-1} \Psi_1 \Psi_1^T - \mu_2^{-1}\Psi_2 \Psi_2^T -\gamma_w (1+\epsilon) (\mu_1+\mu_2) \Psi_3 \Psi_3^T \succeq 0
        \end{aligned}
    \end{equation}
    where $\Psi_1 = [\Phi_1^\dagger +\Phi_2^\dagger K\ 0\ 0\ 0]^T$, $\Psi_2 = [0\ \Phi_2^\dagger K\ 0\ 0]^T$, and $\Psi_3 = [0\ 0\ 0\ X_1^T]^T$. 
    Based on Lemma~\ref{Lemma:SNR_all} and Lemma~\ref{Lemma:Young}, we obtain
    \begin{equation}
        \begin{bmatrix} 
            \beta_1 Q_{\tilde{S}}^{-1}&0&0 & (A+BK)^T \\ 
            *&\beta_2Q_{S_e}^{-1}&0 & (-BK)^T\\
            *&*&\beta_3 Q_w^{-1} &I\\
            * & * & * & Q_{\tilde{S}}
        \end{bmatrix}\succeq 0.
    \end{equation}
    We apply the Schur complement again and then pre- and post-multiply this matrix inequality by $[x_k^T, e_k^T, w_k^T]$ and its transpose, respectively, to obtain
    \begin{equation}
        \begin{aligned}
            & x^T_{k+1} Q_{\tilde{S}}^{-1} x_{k+1} -1 - \beta_1(x^T_k Q_{\tilde{S}}^{-1} x_k -1)\\
            -&\beta_2(e^T_k Q_{S_e}^{-1} e_k -1) - \beta_3(w_k^T Q_w^{-1} w_k - 1) \leq 0.
        \end{aligned}
    \end{equation}
    Based on the S-procedure, $x_{k+1}^T Q_{\tilde{S}}^{-1} x_{k+1} \leq 1$ holds for all $x_k^T Q_{\tilde{S}}^{-1} x_k \leq 1$, $e_k^T Q_{S_e}^{-1} e_k \leq 1$, and $w_k^T Q_w^{-1} w_k \leq 1$.
\end{proof}

%% file: body/5.Sim.tex
\section{Simulation}
In this section, we present the simulation results of the proposed control design for both state-feedback and output-feedback control scenarios.

\subsection{Case Study I: State-Feedback Control}
In this part, we investigate the scenario where the full state of the system is available for feedback.
We consider the system dynamics given by Alanwar et al.~\cite{alanwar2022data}:
\begin{equation}\label{eq:study_1}
    A = \begin{bmatrix} 0.9455 & -0.2426 \\ 0.2486 & 0.9455 \end{bmatrix},\ 
    B = \begin{bmatrix} 0.1 \\ 0 \end{bmatrix}.
\end{equation}
The disturbance matrix is set to $Q_w^{-1} = \operatorname{diag}(2500, 2500)$.
With $\gamma_w = 2.188\times 10^{-3}$, $Q=10^{-2}$, and $R=10^{-4}$, we collect an offline trajectory of length $T=200$.
Following Remark~\ref{rmk:hyperparameter}, 1-D grid searches over practical intervals yield the tuning hyperparameters $\delta_1=0.9$ and $\alpha=0.3$. 
Given these scalars, the state-feedback control law $K = \begin{bmatrix} -9.0896 & -8.2980 \end{bmatrix}$ and the bounding matrix $Q_S = \begin{bmatrix} 0.0169 & -0.0085 \\ -0.0085 & 0.0073 \end{bmatrix}$ are obtained from Proposition~\ref{proposition:K} and Theorem~\ref{theorem:QS}. 
The comparison between the proposed method and the model-based minimal robust positively invariant (mRPI) set \cite{rakovic2005invariant}, i.e., the smallest RPI set for the known closed-loop dynamics, is shown in Fig.~\ref{fig:P1C1}.
In this experiment, the proposed data-driven RPI set contains the model-based mRPI set, and their areas are 0.0225 and 0.0118, respectively.

Next, we benchmark our approach against the method in \cite{mulagaleti2021data}, with each framework using its own synthesized feedback gain and invariant set.
Because the method in \cite{mulagaleti2021data} determines noise bounds through implicit optimization, we evaluate the performance under both ellipsoidal ($Q_w^{-1}$) and amplitude-bounded ($\|w(t)\|_{\infty} \leq 0.0141$) noise configurations to ensure a fair comparison.
The result is shown in Fig.~\ref{fig:state_feedback_data_eval}.
Since the proposed formulation uses ellipsoidal disturbance bounds, the disturbances are handled through ellipsoidal outer approximations in amplitude-bounded scenarios, which may introduce additional geometric conservatism.
In the ellipsoidal case, the areas of the proposed and comparison sets are 0.0225 and 0.0198, respectively.
The proposed formulation avoids the coupling between model selection, feedback gains, and invariant sets in \cite{mulagaleti2021data}.

\begin{figure}[htbp]
    \centering
    \includegraphics[width=0.68\linewidth]{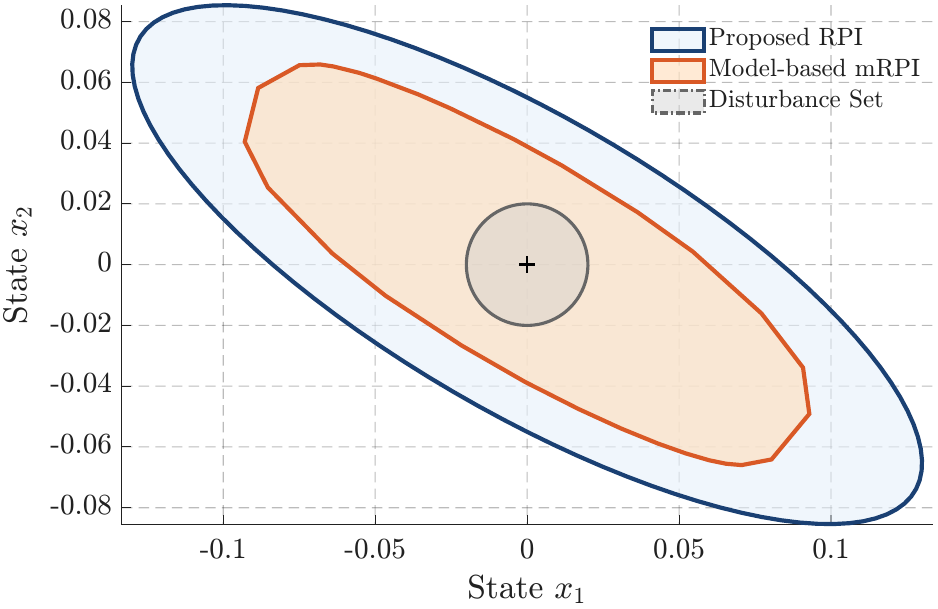}
    \caption{Evaluation of the proposed state-feedback RPI set against the model-based mRPI set.}
    \label{fig:P1C1}
\end{figure}

\begin{figure}[htbp]
    \centering
    \includegraphics[width=0.68\linewidth]{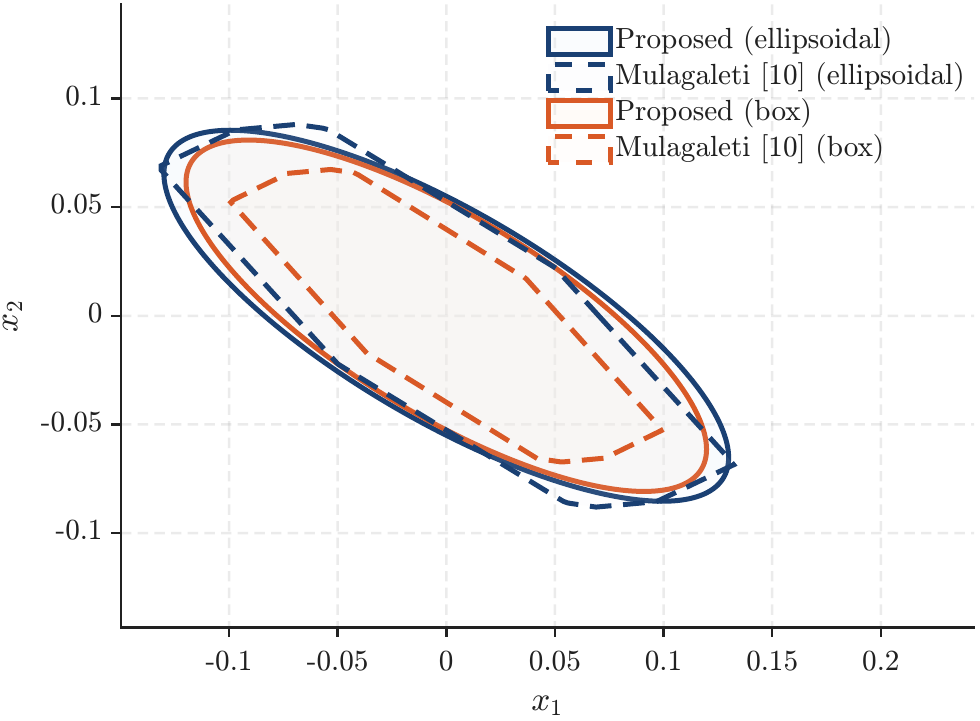}
    \caption{Evaluation of the proposed state-feedback RPI set against the data-driven set-membership method in \cite{mulagaleti2021data}.}
    \label{fig:state_feedback_data_eval}
\end{figure}

To complement the comparison in Fig.~\ref{fig:state_feedback_data_eval}, Table~\ref{table:perf_comparison} reports a comparison with a zonotopic baseline based on \cite{farjadniaRobustDataDrivenTubeBased2024,alanwar2023data} and the set-membership method \cite{mulagaleti2021data} for different ratios of $Q$ and $R$.
The proposed SDP and the method in \cite{mulagaleti2021data} use the same pre-designed feedback gain $K$, whereas the hard-constraint zonotopic baseline computes its gain from a data-consistent nominal model.
The hard-constraint zonotopic baseline follows the tube construction in \cite{farjadniaRobustDataDrivenTubeBased2024}, while the admissible model set is further tightened following Proposition~4 in \cite{alanwar2023data} to reduce conservatism.
Under ellipsoidal noise, the proposed SDP gives the smallest coordinate-wise bounds in all three cases.
Under amplitude-bounded noise, the hard-constraint zonotope gives the smallest $|x_1|$ bound in all three cases and the smallest $|x_2|$ bound in Case 2, while the method in \cite{mulagaleti2021data} gives the smallest $|x_2|$ bound in Cases 1 and 3.

\begin{table}[htbp] \centering \caption{Performance Comparison under Various Noise Geometries and Weight Settings ($Q=10^{-2}$).} \label{table:perf_comparison} \renewcommand{\arraystretch}{1.0} \resizebox{\columnwidth}{!}{\begin{tabular}{@{}llcccc@{}} \toprule \multirow{2}{*}{\textbf{Settings}} & \multirow{2}{*}{\textbf{Method}} & \multicolumn{2}{c}{\textbf{Ellipsoidal Noise}} & \multicolumn{2}{c}{\textbf{Amplitude-Bounded}} \\ \cmidrule(lr){3-4} \cmidrule(l){5-6} & & \textbf{Max} $|x_1|$ & \textbf{Max} $|x_2|$ & \textbf{Max} $|x_1|$ & \textbf{Max} $|x_2|$ \\ \midrule \multirow{3}{*}{\begin{tabular}[c]{@{}l@{}}Case 1\\($R=10^{-4}$)\end{tabular}} & Proposed & 0.0926 & 0.1212 & 0.0929 & 0.1218 \\ & Hard-constraint Zonotope & 0.1377 & 0.1871 & 0.0849 & 0.1163 \\ & Method by \cite{mulagaleti2021data} & 0.1619 & 0.1375 & 0.1105 & 0.0925 \\ \midrule \multirow{3}{*}{\begin{tabular}[c]{@{}l@{}}Case 2\\($R=10^{-2}$)\end{tabular}} & Proposed & 0.3832 & 0.4048 & 0.3787 & 0.3985 \\ & Hard-constraint Zonotope & 0.4721 & 0.5051 & 0.2924 & 0.3135 \\ & Method by \cite{mulagaleti2021data} & 0.4447 & 0.4740 & 0.3401 & 0.3624 \\ \midrule \multirow{3}{*}{\begin{tabular}[c]{@{}l@{}}Case 3\\($R=10^{-6}$)\end{tabular}} & Proposed & 0.0885 & 0.1168 & 0.0872 & 0.1147 \\ & Hard-constraint Zonotope & 0.1290 & 0.1669 & 0.0793 & 0.1035 \\ & Method by \cite{mulagaleti2021data} & 0.1449 & 0.1169 & 0.1128 & 0.0913 \\ \bottomrule \end{tabular}} \end{table}

To further clarify the relationship with the recent data-driven RPI synthesis in \cite{wang2026data}, we add the input-state comparison under both a common feedback gain and the respective feedback gains, as shown in Fig.~\ref{fig:P1C2603_eval}.
In this experiment, $T=200$, $\gamma_w=10^{-3}$, and the input-excitation amplitude is 3.
With the respective feedback gains, the proposed certificate and the ellipsoidal certificate in \cite{wang2026data} have areas of 0.0201 and 0.2087, respectively, corresponding to a 90.38\% area reduction.
With the gain from \cite{wang2026data} fixed for both constructions, the proposed certificate has an area of 0.0267 and reduces the area by 87.21\% relative to the value 0.2087 obtained by the comparison method.

\begin{figure}[htbp]
    \centering
    \includegraphics[width=0.68\linewidth]{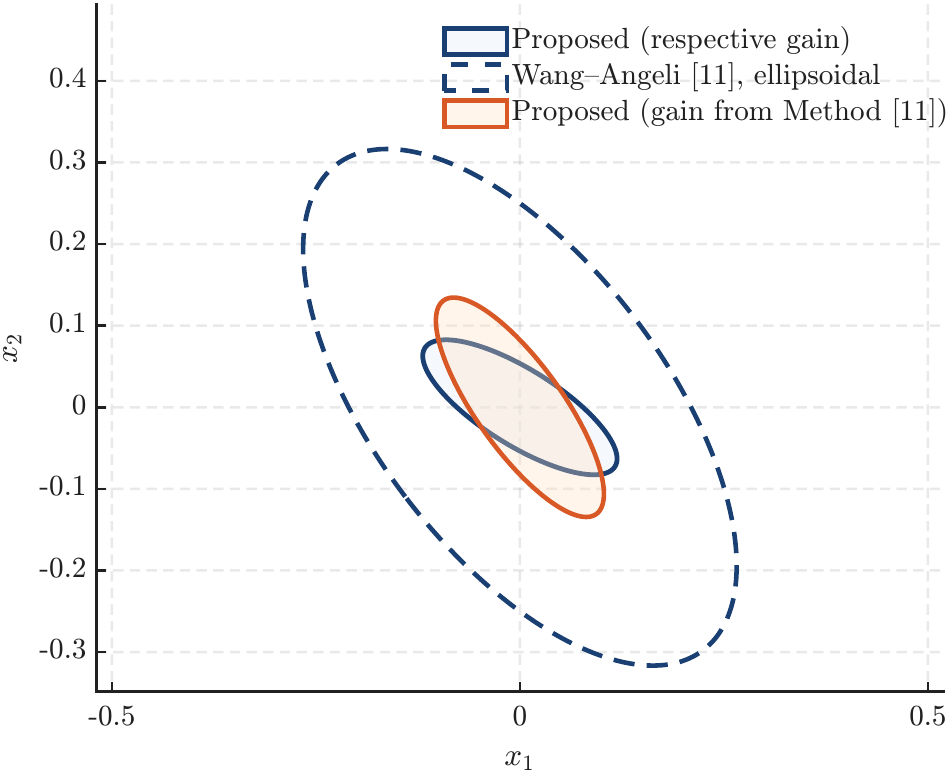}
    \caption{Additional input-state comparison against the ellipsoidal contraction-induced RPI certificate in \cite{wang2026data}.}
    \label{fig:P1C2603_eval}
\end{figure}

\subsection{Case Study II: Output-Feedback Control}
In this section, we simulate the more practical output-feedback scenario. 
To the best of our knowledge, direct data-driven synthesis of RPI sets under output measurements remains largely unexplored in the current literature. 
This is primarily because the coupling between estimation errors, measurement noise, and invariant-set variables introduces additional uncertainty channels and non-convex interactions into the joint synthesis problem.
The purpose of this simulation is to validate the effectiveness of the proposed hierarchical decoupled framework by establishing a baseline using the model-based Minkowski sum method presented in \cite{mayneRobustOutputFeedback2006}.

First, we extend the system \eqref{eq:study_1} from the previous case study by introducing an output matrix $C= \begin{bmatrix}   -0.8&0.2\\    0& 0.7  \end{bmatrix}$.
The process and measurement disturbance bounds are $Q_w^{-1}=Q_v^{-1}=\operatorname{diag}(2500,2500)$.
Using an offline trajectory of length $T=200$ with input-excitation amplitude 3, we set $Q=10^{-2}I_{n_x}$, $R=10^{-2}I_{n_u}$, $Q_e=10^{-2}I_{n_x}$, and $R_e=10^{-4}I_{n_y}$.
The known SNR bounds in Assumption~\ref{Assumption:SNR} are set to $\gamma_w=1.4877\times10^{-3}$, $\gamma_v=3.0541\times10^{-3}$, and $\epsilon^\star=1.0866$.
The grid searches select $\delta_1=0.3$, $\delta_2=0.91$, $\alpha=(0.5,0.3,0.2)$, and $\beta=(0.81,0.11,0.08)$.
Proposition~\ref{proposition:K} and Theorem~\ref{theo:L} then yield the feedback gain $K=\begin{bmatrix} -6.5525 & -2.9835 \end{bmatrix}$ and the observer gain $L=\begin{bmatrix} -0.8254 & -0.0068\\ -0.2212 & 1.0263 \end{bmatrix}$.
The resulting estimation-error and true-state RPI matrices are
\begin{equation*}
    Q_{S_e} = \begin{bmatrix}
     0.0040  &  0.0004\\
    0.0004  &  0.0051
\end{bmatrix}, \quad Q_{\tilde{S}} = \begin{bmatrix}
    0.1242 &  -0.0535\\
   -0.0535  &  0.0698\end{bmatrix}.
\end{equation*}

\begin{figure}[htbp]
    \centering
    \subfloat[RPI sets for the system in Case Study I.]{\includegraphics[width=0.8\linewidth]{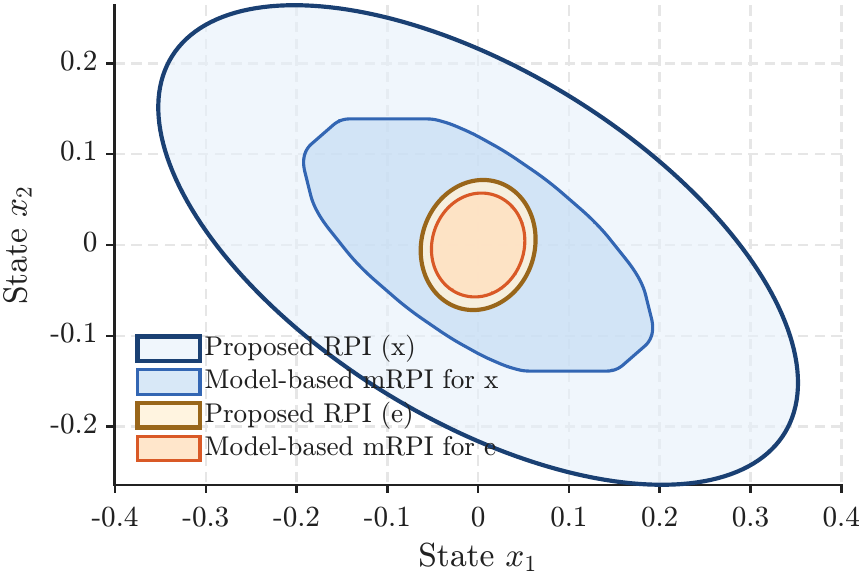}\label{fig:P2C2}}\\[0.4em]
    \subfloat[RPI sets for the double integrator system.]{\includegraphics[width=0.8\linewidth]{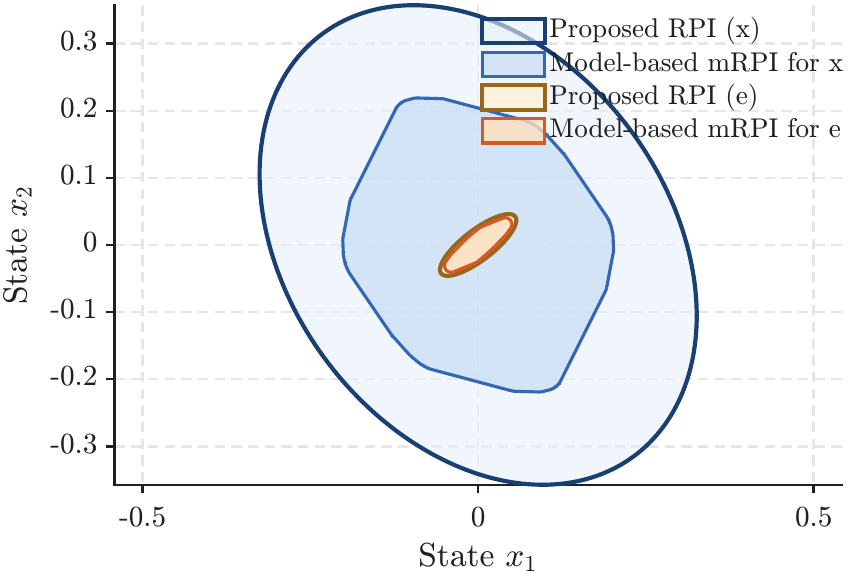}\label{fig:P2C1}}
    \caption{Synthesized true-state RPI sets ($\tilde{\mathcal{S}}_x$) and estimation-error RPI sets ($\mathcal{S}_e$) under the output-feedback architecture.}
    \label{fig:output_feedback_eval}
\end{figure}

For Fig.~\ref{fig:P2C2}, the proposed and model-based sets have areas of 0.2394 and 0.0688 for the physical state, giving a ratio of 3.4812, and areas of 0.0142 and 0.0092 for the estimation error, giving a ratio of 1.5381.

To evaluate the framework on a second system, we use the double-integrator model $A=\begin{bmatrix} 1&1\\0&1 \end{bmatrix}$, $B=\begin{bmatrix} 0.5 & 1 \end{bmatrix}^T$, and $C=\begin{bmatrix} 1 & 0 \end{bmatrix}$.

The disturbance bounds are $Q_w^{-1}=\operatorname{diag}(10^4,10^4)$ and $Q_v^{-1}=10^4$.
Using an offline trajectory of length $T=200$ with input-excitation amplitude 3, we set $Q=10^{-2}I$, $R=10^{-4}I$, $Q_e=10^{-2}I$, and $R_e=10^{-4}I$.
The collected data yield $\gamma_w=1.1288\times10^{-5}$, $\gamma_v=1.4162\times10^{-5}$, and $\epsilon^\star=0.6483$.
The grid searches select $\delta_1=0.1$, $\delta_2=0.91$, $\alpha=(0.5,0.3,0.2)$, and $\beta=(0.61,0.21,0.18)$.
The resulting gains are $K=\begin{bmatrix} -0.6988 & -0.8844 \end{bmatrix}$ and $L=\begin{bmatrix} 1.5928 & 0.6550 \end{bmatrix}^T$.
The invariant-set matrices are
\begin{equation*}
    Q_{S_e} = \begin{bmatrix}
      0.0032  &  0.0021\\
    0.0021 &   0.0021
\end{bmatrix}, \quad Q_{\tilde{S}} = \begin{bmatrix}
     0.1061 &  -0.0344\\
    -0.0344  &  0.1277
\end{bmatrix}.
\end{equation*}

For Fig.~\ref{fig:P2C1}, the proposed and model-based sets have areas of 0.3493 and 0.13 for the physical state, giving a ratio of 2.6873, and areas of 0.0049 and 0.0035 for the estimation error, giving a ratio of 1.4072.